\newtheorem{thm}{Proposition}
\newcommand\figcaption{\def\@captype{figure}\caption}
\newcommand\tabcaption{\def\@captype{table}\caption}
\newtheorem{corollary}{\bf Corollary}
\begin{document} 

\title{Inference-Based Similarity Search in Randomized Montgomery Domains for Privacy-Preserving Biometric Identification}

\author{Yi~Wang,~ %\IEEEauthorrefmark{1},~  
        Jianwu~Wan,~
        Jun~Guo,~
        Yiu-Ming~Cheung,~ 
        and~Pong~C~Yuen~ 
\IEEEcompsocitemizethanks{  
\IEEEcompsocthanksitem Y. Wang and J. Guo are with the School of Computer Science and Network Security, Dongguan University of Technology, China. \protect\\% <-this % stops an unwanted space
E-mail: \{wangyi, guojun\}@dgut.edu.cn.
\IEEEcompsocthanksitem J. Wan is with the Department of Computer Science, Changzhou University, China. E-mail: jianwuwan@gmail.com.
\IEEEcompsocthanksitem Y.-M. Cheung and P. C. Yuen are with the Department of Computer Science, Hong Kong Baptist University, Hong Kong SAR, China. \protect\\% <-this % stops an unwanted space
E-mail: \{ymc, pcyuen\}@comp.hkbu.edu.hk. 
% note need leading \protect in front of \\ to get a newline within \thanks as
% \\ is fragile and will error, could use \hfil\break instead.
%\thanks{Manuscript received April 19, 2005; revised August 26, 2015.}
}}
 
\markboth{Accepted by IEEE Trans. Pattern Anal. and Mach. Intell.}%
{Wang \MakeLowercase{\textit{et al.}}: Inference-based Search }

\IEEEtitleabstractindextext{
\begin{abstract}  
Similarity search is essential to many important applications and often involves searching at scale on high-dimensional data based on their similarity to a query. In biometric applications, recent vulnerability studies have shown that adversarial machine learning can compromise biometric recognition systems by exploiting the biometric similarity information. Existing methods for biometric privacy protection are in general based on pairwise matching of secured biometric templates and have inherent limitations in search efficiency and scalability. In this paper, we propose an inference-based framework for privacy-preserving similarity search in Hamming space. Our approach builds on an obfuscated distance measure that can conceal Hamming distance in a dynamic interval. Such a mechanism enables us to systematically design statistically reliable methods for retrieving most likely candidates without knowing the exact distance values. We further propose to apply Montgomery multiplication for generating search indexes that can withstand adversarial similarity analysis, and show that information leakage in randomized Montgomery domains can be made negligibly small. Our experiments on public biometric datasets demonstrate that the inference-based approach can achieve a search accuracy close to the best performance possible with secure computation methods, but the associated cost is reduced by orders of magnitude compared to cryptographic primitives.
\end{abstract}

% Note that keywords are not normally used for peerreview papers.
\begin{IEEEkeywords}
Biometric identification, privacy protection, nearest neighbour search, hypothesis testing, multi-index hashing.
\end{IEEEkeywords}}

% make the title area
\maketitle

\IEEEdisplaynontitleabstractindextext 
\IEEEpeerreviewmaketitle

\IEEEraisesectionheading{\section{Introduction}\label{sec:introduction}}

\IEEEPARstart{B}{eyond} conventional applications in forensic investigations and border control, personal identification based on biometrics (e.g., face, iris, fingerprint) is also expanding in the private sector. For instance, hundreds of US healthcare organizations are using iris and palm-vein biometrics for  patient identification and authentication \cite{irishealth}. India's DCB Bank has introduced Aadhaar-based fingerprint-reading ATMs where customers can access their bank accounts using their Aadhaar number and biometric details instead of a PIN \cite{atm}. The service is required to connect with the Aadhaar server for authenticating the identity of a customer every time a transaction is initiated. Today's social media websites provide photo-sharing services and are expected to handle face search in a database on the order of millions of enrolled records every day \cite{Wang16face}. Such applications of biometric identification in general involve handling queries and searches at scale in a networked environment.

Biometric identification systems typically require \emph{one-to-many} comparisons by evaluating \emph{biometric similarity} between an input query and the database records in some retrieval space. Security analysis has shown that information leakage in storing and processing biometric data can lead to identity theft and adversarial tracking \cite{Nandakumar15sp,Bringer15}. In particular, \emph{adversarial machine learning} of biometric recognition is found possible in the retrieval space \cite{Biggio15sp}. Even if individual templates are secured in the database, it was shown that an adversary can still exploit the biometric similarity information to compromise system operations. For instance, the similarity information can be used to fabricate a biometric spoof in a hill-climbing attack \cite{Galbally10} or to perform a biometric equivalent of the dictionary attack \cite{Boult07}. Another example is in multibiometric systems where multiple biometric traits are combined to increase the recognition accuracy and the population coverage of large-scale personal identification. Without fabricating any fake trait, it is possible to evade multibiometric systems if the attacker seizes the genuine similarity distribution of one single trait \cite{Rodrigues09,Biggio16pami}. Security and privacy risks associated with the biometric similarity information signify the importance of protecting it in the process of biometric identification.  

Binary representations of biometric features are of growing interest for search applications as they can ensure fast matching operations in Hamming space \cite{Alice15,Lu15,Sun09}. To achieve high discriminative power and matching accuracy, binary representations of biometric features typically contain hundreds, if not thousands, of bits \cite{Lim13,Sun09}.  
Moreover, due to acquisition noise and other factors, biometric features are known to have inherently large intra-class variations between samples that are taken from the same trait. 
This indicates that the input query is not likely to find an identical match in the biometric database and the true match may have a relatively large \emph{distance}, i.e., the number of mismatching bits, to the query for similarity evaluation in Hamming space.  
Rather than exhaustively exploring the database by pairwise comparisons, nearest neighbour (NN) methods for long binary representations with a large search radius of interest are proposed for fast retrieval in Hamming space \cite{Norouzi14,Muja14HD,Alice15}. These methods are designed to reduce the matching complexity of a similarity search by using certain data structures, e.g., hash tables.

Without privacy protection in design, NN methods have built-in mechanisms that are vulnerable to information leakage. For example, search indexes are often generated directly from biometric features through \emph{distance-preserving} transformations \cite{Lu15,Alice11,Alice07,Sun09}. This allows an adversary who can access the data structures to analyse the search indexes and glean critical similarity information of the dataset. Another source of information leakage is at the retrieval stage. To increase the retrieval accuracy, NN methods generally require distance computations and comparisons between the query and matching candidates retrieved from the data structure. From the perspective of privacy protection, it is desirable to return the retrieval results in order without revealing their distance values \cite{Bringer13sp}.

Secure multiparty computation techniques are proposed in the context of \emph{one-to-one} verification to protect biometric similarity evaluations \cite{Bringer13sp,Bringer15,Barni15sp,CloudID15}. They are mostly based on combining cryptographic primitives such as homomorphic encryption and garbled circuits to process encrypted biometric data without the need of decrypting them. As distance measures are not preserved in the encrypted domain, adversarial learning can be effectively prevented. In our context of one-to-many identification, however, a similarity search must return multiple candidates in order according to their similarity measures to the query. The process involves not only intensive distance computations but also intensive distance comparisons that are especially cumbersome to perform in the encrypted domain \cite{Wu14}.  

In this paper, we propose an inference-based framework for privacy-preserving similarity search of biometric identities in Hamming space. We argue that, from the perspective of adversarial machine learning, an attacker must collect sufficient and reliable training data that characterise the underlying biometric feature space. This motivates us to design effective mechanisms to diffuse the feature information over piecewise obfuscated sub-hash codes, rather than concealing everything in the encrypted domain. In this way, we aim to facilitate information-theoretic privacy in the design of hash-based indexing methods without sacrificing too much of search accuracy and efficiency.

The main contributions of this paper are:   
\begin{itemize} 
\item We start with the Hamming-ball search problem and model it as binary channel estimation for which we derive the optimal decision rule based on the theory of binary hypothesis testing. This enables the design of an inference-based approach for detecting, with a high accuracy in the retrieval space, matching candidates within a predefined neighbourhood of the query without knowing the actual distance values.

\item As a key element of the binary channel, we present a distance obfuscation mechanism based on matching piecewise obfuscated binary codes for hash-based similarity search. We derive analytically the condition where it is guaranteed to have a substring collision for neighbouring pairs, and show how it gives rise to an obfuscated distance measure that conceals the Hamming distance in a dynamic interval. 
 
\item We show how the inference-based approach for Hamming-ball search can be extended to handle a test-based rank-ordered search that arranges the retrieved candidates in rank order without the need of comparing distance values. We further show how to perform an approximate NN search by choosing candidates based on the obfuscated distance measure with or without the need of test-based ranking.

\item We apply Montgomery multiplication to generate privacy-preserving search indexes that allow detection of substring collisions in randomized Montgomery domains. This enables the design of a search protocol that restricts all parties involved in the process to perform their respective computations in separate Montgomery domains for privacy protection.

\item We study the privacy-preserving strength of our search scheme in an information-theoretic approach. In particular, we quantify the privacy gain of indexing in randomized Montgomery domains, and analyze how difficult it is for an adversary to infer biometric identities from search indexes in the proposed indexing and retrieval process.
\end{itemize}
 
The remainder of this paper is organized as follows. Section \ref{sec:review} reviews the related work. Section \ref{sec:search} provides the details of the inference-based similarity search framework. Section \ref{sec:index} presents the design of index generation in randomized Montgomery domains and the description of the search protocol. Analysis of the privacy-preserving strength is presented in Section \ref{sec:analysis}. Performance evaluation is reported in Section \ref{sec:perform}. Finally, we draw conclusions in Section \ref{sec:smry}.

\section{Related Work}
\label{sec:review}
 
Biometric privacy study has been largely focused on enabling one-to-one matching without revealing the biometric features that characterize an individual, known as biometric template protection \cite{Nandakumar15sp,Rane13bio,Patel15sp}.
Depending on how the protected reference is generated and matched, template protection schemes can be classified into bio-cryptosystems and feature transformations \cite{Nandakumar15sp}. Bio-cryptosystems generate error correction codewords for an indirect matching of biometric templates \cite{Rane13bio}. They typically yield a yes/no decision for verification on one-to-one basis, which is not suitable for returning candidates in the order of their matching degrees. Feature transformation methods apply non-invertible functions to biometric templates \cite{Patel15sp}. The protected references are usually distance preserving to perform matching directly in the transformed space. 

Recently, secure signal processing methods have been introduced to protect biometric matching by concealing both the feature contents and similarity evaluation in the encrypted domain \cite{Cavoukian14,Bringer15,Barni15sp,Lagendijk13}. These techniques use cryptographic primitives as a wrapper of distance-related functions. Thus, a privacy-preserving similarity search can be decomposed into two steps: secure distance computation followed by minimum (distance) finding via oblivious transfer \cite{Rane13ANN}. As unauthorized information gathering is precluded in the encrypted domain, secrecy and accuracy can be ensured in pairwise matching for biometric identification. However, cryptographic primitives are generally associated with high computation costs and excessive communication overheads. For instance, minimum finding is very complex to perform in the encrypted domain. It demands pairwise comparison of all encrypted distance values and frequent interactions between the query and the database \cite{Wu14}. This makes it intrinsically difficult, if not impossible, to meet the efficiency and scalability requirements of search applications, especially when dealing with high-dimensional data, in the encrypted domain \cite{Wu14,Rane13ANN}. 

Other than search in the encrypted domain, the concept of search with reduced reference is proposed in privacy-preserving content-based information retrieval to protect the original content and accelerate the search simultaneously \cite{Wu14,Weng15tifs,Weng16tkde}. The basic idea therein is to enforce $k$-anonymity or $l$-diversity properties by raising the ambiguity level of a data collection \cite{Gertz08,Bakken04}. Techniques of this paradigm are mostly based on randomized embedding \cite{Rane13ANN} and in particular locality sensitive hashing (LSH) \cite{Weng15tifs,Jimenez2015smh,Aghasaryan13lsh,Boufounos11}.  
LSH performs approximate NN search by hashing similar items into the same bucket and those that are distant from one another into different buckets, respectively, with high probabilities.
However, LSH by itself does not guarantee privacy \cite{Rane13ANN,Aghasaryan13lsh,Boufounos11}. 
It requires all parties involved in a search to use the same random keys in generating hash codes. 
Moreover, to achieve a good precision in search, LSH-based algorithms usually require a large number of random hash functions and may increase privacy risks \cite{Aghasaryan14lsh}.

Privacy-enhanced variants of LSH are recently proposed by combining LSH with cryptographic or information-theoretic protocols \cite{Rane13ANN}. The privacy protection framework proposed in \cite{Weng15tifs} generates a partial query instance by omitting certain bits in one or more sub-hash values to increase the ambiguity of query information for the server. The hash values of retrieved candidates are returned to the client for refinement. 
The framework is extended in \cite{Weng16tkde} where partial encryption is performed on the hash code of each item to prevent an untrustworthy server from precisely linking queries and database records. In particular, the server uses the unencrypted part of each item for approximate indexing and search while the client uses the encrypted part for re-ranking the candidates received from the server. To limit the number of candidates sent to the client, the server performs a preliminary ranking based on the \emph{partial distance} computed from the unencrypted part. The trade-off between privacy and utility of a search at the server side is therefore controlled by the number of unencrypted bits. We call this approach ``LSH + partial distance'' for brevity and will use it as one baseline approach for performance comparison.

\section{Privacy-Preserving Similarity Search}
\label{sec:search}

In this section, we study the privacy-preserving similarity search problem in the framework of binary hypothesis testing. Section \ref{sec:search:problem} details how we model the Hamming-ball search problem as binary channel estimation. Section \ref{sec:search:NPtest} derives the decision rule for the binary hypothesis test. Section \ref{sec:search:mimp} presents the design of the distance obfuscation mechanism and the obfuscated distance measure. Section \ref{sec:search:rank} describes the test-based rank-ordered search scheme and discusses how to perform an approximate NN search based on the obfuscated distance measure.

\subsection{Inference-Based Hamming-Ball Search}
\label{sec:search:problem} 
 
Consider a set $\Omega$ of binary strings, each of which represents a biometric identity and is of length $D$ bits.
Note that $D$ is typically large for biometric data. 
The query $\mathbf{q}$ is also of length $D$ bits. As binary embedding is usually distance-preserving, we treat two biometric identities as similar if their Hamming distance is close.  

Any string $\mathbf{p}\in\Omega$ is considered to be an $r$-\emph{neighbour} of the query $\mathbf{q}$ if the Hamming distance between $\mathbf{p}$ and $\mathbf{q}$, denoted by $d(\mathbf{p}, \mathbf{q})$, satisfies $d(\mathbf{p}, \mathbf{q}) \leq r$. All strings $\mathbf{p}\in\Omega$ that satisfy $d(\mathbf{p}, \mathbf{q})\leq r$
constitute the set, denoted by $\mathcal{B}(\mathbf{q}; r)$, of $r$-neighbours of the query $\mathbf{q}$ within its neighbourhood of Hamming radius $r$.
This is known as the $r$-neighbour detection or Hamming-ball search problem \cite{Norouzi14}. A related problem, known as $k$-NN search, is to find $k$ records in $\Omega$ that are closest in Hamming distance to the query $\mathbf{q}$. We first resolve the $r$-neighbour detection problem, and later show how it can be extended to handle the $k$-NN search problem.
 
Our inference-based approach for $r$-neighbour detection requires a distance obfuscation mechanism to conceal the exact value of $d(\mathbf{p}, \mathbf{q})$ in a certain interval $[a, b]$, where both variables $a$ and $b$ take on integer values and satisfy $a\leq d(\mathbf{p}, \mathbf{q}) \leq b$. In this way, it is clear that $d(\mathbf{p}, \mathbf{q})>r$ if $r < a$ and $d(\mathbf{p}, \mathbf{q})< r$ if $r > b$. There is an uncertainty in comparing $d(\mathbf{p}, \mathbf{q})$ with $r$ when $a \leq r \leq b$. In this case, we consider that an estimate of $d(\mathbf{p}, \mathbf{q})$, denoted by $\hat{d}(\mathbf{p}, \mathbf{q})$, is randomly chosen from the interval $[a, b]$ with equal probability. Then,
\begin{equation}
  {\rm Pr}\{\hat{d}(\mathbf{p}, \mathbf{q})\leq r\} = \frac{r-a+1}{b-a+1} ~.
\label{eq:x}
\end{equation} 
For convenience, let
\begin{equation}
\pi \stackrel{\rm def}{=} \frac{r-a+1}{b-a+1}
\label{eq:pi}
\end{equation}
from which we have $1/(b-a+1) \leq \pi \leq 1$ for $a\leq r \leq b$, and the value of $\pi$ depends on $a$ and $b$ for any given~$r$. 

\begin{figure}[t]
\centering
\includegraphics[width=.8\linewidth]{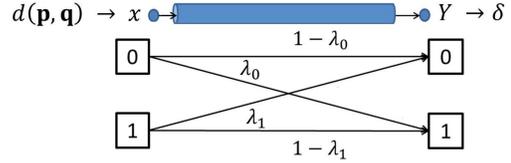}
\caption{Inference-based $r$-neighbour detection modelled as binary channel estimation with decision rule $\delta$.} 
\label{fig:bc}
\end{figure}  
  
We make inference based on the variable $\pi$. Specifically, let $x=0$ if $d(\mathbf{p}, \mathbf{q})> r$ and $x=1$ if $d(\mathbf{p}, \mathbf{q})\leq r$. Let $Y$ be a binary random variable taking on values $y=0$ if $\pi < \eta$ and $y=1$ if $\pi \geq \eta$, where $\eta \in(0, 1)$ is a predefined threshold. 
We resolve the $r$-neighbour detection problem as a binary hypothesis test $H_0: x=0, H_1: x=1$ that predicts the original binary input $x$ based on a decision rule $\delta$ as a function $\delta(y)$ of the values of $Y$.
  
This can be regarded as a binary channel estimation problem \cite{Poor94} where a binary digit $x$ is transmitted over some noisy channel that outputs a random observation $Y$ with outcome being either 0 or 1, as illustrated in Fig. \ref{fig:bc}.
Due to uncertainties introduced by distance obfuscation and threshold comparison, the original binary input $x$ may be received as $Y\neq x$ with probabilities $
\lambda_0 = {\rm Pr}\{Y=1\:|\:x=0\}$ and $\lambda_1 = {\rm Pr}\{Y=0\:|\:x=1\}$. 
We wish to find an optimal way to decide what was transmitted based on the values of $Y$. In the following section, we design the decision rule $\delta$ given the ``transmission'' error probabilities $\lambda_0$ and $\lambda_1$.

\subsection{Design of the Decision Rule $\delta$}
\label{sec:search:NPtest}

Any decision rule $\delta$ is associated with two types of errors in testing the \emph{null} hypothesis (i.e., $H_0$) versus the \emph{alternative} hypothesis (i.e., $H_1$).
Let $P_{\rm F}(\delta)$ denote the false-alarm probability that $H_1$ is falsely accepted when $H_0$ is true. Let $P_{\rm M}(\delta)$ denote the miss probability that $H_1$ is falsely rejected when $H_1$ is true. The detection probability $P_{\rm D}(\delta)=1-P_{\rm M}(\delta)$ is also called the \emph{power} of the decision rule $\delta$.  
There is a fundamental trade-off between $P_{\rm F}(\delta)$ and $P_{\rm M}(\delta)$ in the hypothesis test.
According to the Neyman-Pearson criteria \cite{Rice07}, the optimal decision rule is the one to the constrained optimization problem: $\max_\delta P_{\rm D}(\delta)$ subject to $P_{\rm F}(\delta)\leq \alpha$, with $\alpha$ being the \emph{significance level} of the test. The Neyman-Pearson lemma \cite{Rice07} states that the likelihood ratio test is the most powerful test.

Consider the binary channel estimation problem in Fig.~\ref{fig:bc}. The observation $Y$ has probability mass functions 
$p_x(y)=\lambda_x$ if $y\neq x$ and $p_x(y)=1-\lambda_x$ if $y=x$, for $x=0,1$. The likelihood ratio for an observation $Y=y$ is thus given by
\begin{equation} 
L(y) = \frac{p_{1}(y)}{p_{0}(y)} = \begin{cases}
\vspace{1mm}
\dfrac{\lambda_1}{1-\lambda_0},  & \text{if $y = 0$} \\
\dfrac{1-\lambda_1}{\lambda_0},  & \text{if $y = 1$} ~.
\end{cases}
\label{eq:Ly}
\end{equation} 
The Neyman-Pearson test rejects $H_0$ for $L(Y)>\eta$ for some $\eta$, and $\eta$ is chosen such that ${\rm Pr}\{L(Y)>\eta\}=\alpha$ if $H_0$ is true.   
When $\lambda_0+\lambda_1 < 1$, the Neyman-Pearson decision rule, which is the conditional probability of accepting $H_1$ given that we observe $Y=y$, turns out to be \cite{Poor94}
\begin{equation}
\tilde{\delta}_{\rm NP}(y) = \begin{cases}
\vspace{1mm}
\dfrac{\alpha}{\lambda_0}, & \text{if $y=1$} \\
0, & \text{if $y=0$} 
\end{cases}
\label{eq:alpha1}
\end{equation}
for $0\leq \alpha < \lambda_0$, and
\begin{equation}
\tilde{\delta}_{\rm NP}(y) = \begin{cases}
\vspace{1mm}
1, & \text{if $y=1$} \\
\dfrac{\alpha-\lambda_0}{1-\lambda_0}, & \text{if $y=0$}  
\end{cases} 
\label{eq:alpha2}
\end{equation}
for $\lambda_0\leq \alpha \leq 1$. 
The resulting detection probability is
\begin{equation}
P_{\rm D} (\tilde{\delta}_{\rm NP}) = \begin{cases}
\vspace{1mm}
\alpha\left(\dfrac{1-\lambda_1}{\lambda_0}\right), & \text{if $0\leq \alpha < \lambda_0$}   \\
1-\lambda_1 + \lambda_1\left(\dfrac{\alpha-\lambda_0}{1-\lambda_0}\right), & \text{if $\lambda_0 \leq \alpha < 1$}  ~.
\end{cases} 
\label{eq:Pd}
\end{equation}

It can be seen in (\ref{eq:alpha2}) that $\tilde{\delta}_{\rm NP}(y)=y$ when $\alpha=\lambda_0$. In other words, by allowing $P_{\rm F} (\delta) \leq \lambda_0$, the simple decision rule $\delta(y)=y$ that accepts $H_x$ if $y=x$ yields the same performance as the most powerful test according to the Neyman-Pearson criteria. The resulting detection probability is $P_{\rm D} (\delta) = 1 - \lambda_1$.
In this case, the inherent transmission error probability serves as an upper bound of the hypothesis testing error probability, i.e., $P_{\rm F} (\delta) + P_{\rm M} (\delta) \leq \lambda_0+\lambda_1$.  
This enables us to design the binary channel directly through a trade-off between the channel characteristics $\lambda_0$ and $\lambda_1$. 
 
\subsection{Design of the Binary Channel}
\label{sec:search:mimp}
 
As discussed in Section  \ref{sec:search:problem}, the binary channel in our context is in the form of a noisy channel that consists of distance obfuscation followed by threshold comparison. The former maps a distance value into an interval $[a, b]$ that defines $\pi$ in (\ref{eq:pi}). The latter yields $\lambda_0$ and $\lambda_1$ under the null and alternative hypotheses, respectively. Here, we present a distance obfuscation mechanism based on matching piecewise obfuscated binary codes in a carefully designed \emph{multi-index multi-probe} (MIMP) scheme for hash-based similarity search. 

Specifically, for all strings $\mathbf{p}\in\Omega$ and the query~$\mathbf{q}$, we divide each of them into $L$ non-overlapping segments, denoted by $\{\mathbf{p}^{(1)}, \mathbf{p}^{(2)}, \ldots, \mathbf{p}^{(L)}\}$ and $\{\mathbf{q}^{(1)}, \mathbf{q}^{(2)}, \ldots, \mathbf{q}^{(L)}\}$. By definition, we have $\mathbf{p}=\mathbf{p}^{(1)}||\mathbf{p}^{(2)}||\ldots||\mathbf{p}^{(L)}$ and $\mathbf{q}=\mathbf{q}^{(1)}|| \mathbf{q}^{(2)}||\ldots||\mathbf{q}^{(L)}$, where $||$ is the concatenation operator. If the string length $D$ is divisible by $L$, each substring is of length $s=D/L$ bits. If $D$ is not divisible by $L$, each of the first $(D\mod L)$ substrings is of length $s=\lceil D/L \rceil$ bits, and each of the remaining substrings has $s-1$ bits.  

For each substring $\mathbf{p}^{(i)}$ of $\mathbf{p}$, $i=1, 2, \ldots, L$, we create a set $\mathcal{V}_{\mathbf{p}^{(i)}}$ containing $\mathbf{p}^{(i)}$ and a one-bit variant of $\mathbf{p}^{(i)}$ that differs from $\mathbf{p}^{(i)}$ by one bit at a randomly selected position. For convenience, we write $\mathcal{V}_{\mathbf{p}^{(i)}}=\{\tilde{\mathbf{p}}_j^{(i)}\}$ where $j=0,1$, with $\tilde{\mathbf{p}}_0^{(i)}=\mathbf{p}^{(i)}$ and $\tilde{\mathbf{p}}_1^{(i)}$ being the one-bit variant of $\mathbf{p}^{(i)}$.
For each substring $\mathbf{q}^{(i)}$ of the query $\mathbf{q}$, $i=1, 2, \ldots, L$, we create a set that contains all possible one-bit variants of $\mathbf{q}^{(i)}$, denoted by $\mathcal{V}_{\mathbf{q}^{(i)}} = \{\tilde{\mathbf{q}}^{(i)}_k\}$, where $k=1, 2, \ldots, s$ if the substring is of length $s$ bits and $k=1, 2, \ldots, s-1$ if the substring has $s-1$ bits. We call a match between $\tilde{\mathbf{p}}_j^{(i)}\in \mathcal{V}_{\mathbf{p}^{(i)}}$ and $\tilde{\mathbf{q}}_k^{(i)}\in \mathcal{V}_{\mathbf{q}^{(i)}}$ as a \emph{substring collision}. Let $C(\mathcal{V}_{\mathbf{p}^{(i)}}, \mathcal{V}_{\mathbf{q}^{(i)}})$ denote the \emph{collision count} (i.e., the number of all possible substring collisions) 
between $\mathcal{V}_{\mathbf{p}^{(i)}}$ and $\mathcal{V}_{\mathbf{q}^{(i)}}$.

\begin{thm}[]
\label{thm:collision}   
The collision count between $\mathcal{V}_{\mathbf{p}^{(i)}}$ and $\mathcal{V}_{\mathbf{q}^{(i)}}$ satisfies $C(\mathcal{V}_{\mathbf{p}^{(i)}}, \mathcal{V}_{\mathbf{q}^{(i)}}) \leq 1$ for all $i$.
The case $C(\mathcal{V}_{\mathbf{p}^{(i)}}, \mathcal{V}_{\mathbf{q}^{(i)}})=1$ implies $d(\mathbf{p}^{(i)}, \mathbf{q}^{(i)}) \leq 2$. The case  $C(\mathcal{V}_{\mathbf{p}^{(i)}}, \mathcal{V}_{\mathbf{q}^{(i)}})=0$ implies $d(\mathbf{p}^{(i)}, \mathbf{q}^{(i)}) \geq 2$.
\end{thm}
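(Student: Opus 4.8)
The plan is to exploit the structural asymmetry between the two variant sets. The set $\mathcal{V}_{\mathbf{q}^{(i)}}$ consists of the $s$ (or $s-1$) \emph{distinct} strings lying at Hamming distance exactly one from $\mathbf{q}^{(i)}$, whereas $\mathcal{V}_{\mathbf{p}^{(i)}}=\{\tilde{\mathbf{p}}_0^{(i)},\tilde{\mathbf{p}}_1^{(i)}\}$ consists of $\mathbf{p}^{(i)}$ together with one string at distance exactly one from it. Since the elements of $\mathcal{V}_{\mathbf{q}^{(i)}}$ are pairwise distinct, a substring collision is precisely an element common to the two sets, so $C(\mathcal{V}_{\mathbf{p}^{(i)}},\mathcal{V}_{\mathbf{q}^{(i)}})=|\mathcal{V}_{\mathbf{p}^{(i)}}\cap\mathcal{V}_{\mathbf{q}^{(i)}}|$; I would record this reduction first, so that all three claims become statements about this intersection.

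For the bound $C\leq 1$, the key observation is that any two distinct elements of $\mathcal{V}_{\mathbf{q}^{(i)}}$ are obtained by flipping $\mathbf{q}^{(i)}$ at two \emph{different} positions and hence are at Hamming distance exactly two from each other; in other words $\mathcal{V}_{\mathbf{q}^{(i)}}$ has minimum distance two. But $\tilde{\mathbf{p}}_0^{(i)}$ and $\tilde{\mathbf{p}}_1^{(i)}$ are at Hamming distance one, so they cannot both belong to $\mathcal{V}_{\mathbf{q}^{(i)}}$, forcing $|\mathcal{V}_{\mathbf{p}^{(i)}}\cap\mathcal{V}_{\mathbf{q}^{(i)}}|\leq 1$. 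For the implication $C=1\Rightarrow d(\mathbf{p}^{(i)},\mathbf{q}^{(i)})\leq 2$, fix the common string $\mathbf{u}\in\mathcal{V}_{\mathbf{p}^{(i)}}\cap\mathcal{V}_{\mathbf{q}^{(i)}}$; then $d(\mathbf{u},\mathbf{p}^{(i)})\leq 1$ (as $\mathbf{u}$ is $\mathbf{p}^{(i)}$ or its one-bit variant) and $d(\mathbf{u},\mathbf{q}^{(i)})=1$, so the triangle inequality for Hamming distance yields $d(\mathbf{p}^{(i)},\mathbf{q}^{(i)})\leq 2$.

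For the implication $C=0\Rightarrow d(\mathbf{p}^{(i)},\mathbf{q}^{(i)})\geq 2$, I would argue by contraposition. If $d(\mathbf{p}^{(i)},\mathbf{q}^{(i)})=1$, then $\mathbf{p}^{(i)}=\tilde{\mathbf{p}}_0^{(i)}$ is itself a one-bit variant of $\mathbf{q}^{(i)}$ and therefore lies in $\mathcal{V}_{\mathbf{q}^{(i)}}$, giving a collision. If $d(\mathbf{p}^{(i)},\mathbf{q}^{(i)})=0$, i.e. $\mathbf{p}^{(i)}=\mathbf{q}^{(i)}$, then the one-bit variant $\tilde{\mathbf{p}}_1^{(i)}$ is a one-bit variant of $\mathbf{q}^{(i)}$ and hence lies in $\mathcal{V}_{\mathbf{q}^{(i)}}$, again giving a collision. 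In either case $C\geq 1$, so $C=0$ forces $d(\mathbf{p}^{(i)},\mathbf{q}^{(i)})\geq 2$.

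I do not anticipate a serious obstacle; the points that need care are bookkeeping ones — confirming that the collision count is not inflated by repeated query variants (it is not, since they are all distinct) and keeping straight the asymmetric roles of the two sets, in particular that $\mathcal{V}_{\mathbf{q}^{(i)}}$ excludes $\mathbf{q}^{(i)}$ itself while $\mathcal{V}_{\mathbf{p}^{(i)}}$ includes $\mathbf{p}^{(i)}$. It is worth noting explicitly that the case $d(\mathbf{p}^{(i)},\mathbf{q}^{(i)})=2$ is genuinely undetermined: depending on the randomly chosen flip position used to form $\tilde{\mathbf{p}}_1^{(i)}$, one may get either $C=0$ or $C=1$. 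This is exactly why the proposition states one-sided implications at the boundary value $2$ rather than an equivalence, and it is this residual ambiguity that later serves as the obfuscation in the interval $[a,b]$.
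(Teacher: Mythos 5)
Your proof is correct, and it is organized differently from the paper's. The paper proceeds by exhaustive case analysis on $d(\mathbf{p}^{(i)},\mathbf{q}^{(i)})$ — the four cases $d=0$, $d=1$, $d=2$, $d>2$ — computing or bounding $C$ in each case and then reading off the three claims from the resulting table; in particular it records the extra fact that $C=1$ holds \emph{with certainty} when $d\leq 1$ and holds at $d=2$ exactly when the random flip position of $\tilde{\mathbf{p}}_1^{(i)}$ lands on a differing bit, information the paper later leans on (e.g.\ in the discussion of the extreme choices of $L$). You instead prove each claim directly: the identification $C=|\mathcal{V}_{\mathbf{p}^{(i)}}\cap\mathcal{V}_{\mathbf{q}^{(i)}}|$, the bound $C\leq 1$ from the observation that $\mathcal{V}_{\mathbf{q}^{(i)}}$ has minimum pairwise distance two while the two elements of $\mathcal{V}_{\mathbf{p}^{(i)}}$ are at distance one, the implication $C=1\Rightarrow d\leq 2$ by the triangle inequality through the common element, and $C=0\Rightarrow d\geq 2$ by contraposition. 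Your route is more economical and arguably cleaner for the $C\leq 1$ bound, which the paper only establishes implicitly case by case; the paper's enumeration buys the sharper per-case description of $C$ that the one-sided implications of the proposition deliberately suppress. Your closing remark that $d=2$ is the genuinely ambiguous boundary matches the paper's third bullet exactly, and your bookkeeping on the asymmetry of the two variant sets (that $\mathcal{V}_{\mathbf{q}^{(i)}}$ excludes $\mathbf{q}^{(i)}$ itself) is precisely the point on which a careless argument would fail.
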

\begin{proof}
The following cases hold for each $i=1, 2, \ldots, L$. 
\begin{itemize}
\item  If $d(\mathbf{p}^{(i)}, \mathbf{q}^{(i)})=0$, we have $C(\mathcal{V}_{\mathbf{p}^{(i)}}, \mathcal{V}_{\mathbf{q}^{(i)}})=1$, since $\tilde{\mathbf{p}}^{(i)}_0 \neq \tilde{\mathbf{q}}^{(i)}_k$ for all $k$ and there exists exactly one $\tilde{\mathbf{q}}_k^{(i)}\in \mathcal{V}_{\mathbf{q}^{(i)}}$ such that $\tilde{\mathbf{p}}_1^{(i)}= \tilde{\mathbf{q}}_k^{(i)}$.

\item  If $d(\mathbf{p}^{(i)}, \mathbf{q}^{(i)})=1$, we have $C(\mathcal{V}_{\mathbf{p}^{(i)}}, \mathcal{V}_{\mathbf{q}^{(i)}})=1$, since $\tilde{\mathbf{p}}^{(i)}_1 \neq \tilde{\mathbf{q}}^{(i)}_k$ for all $k$ and there exists exactly one $\tilde{\mathbf{q}}_k^{(i)}\in \mathcal{V}_{\mathbf{q}^{(i)}}$ such that $\tilde{\mathbf{p}}_0^{(i)}= \tilde{\mathbf{q}}_k^{(i)}$.
 
\item 
If $d(\mathbf{p}^{(i)}, \mathbf{q}^{(i)}) = 2$, we have $C(\mathcal{V}_{\mathbf{p}^{(i)}}, \mathcal{V}_{\mathbf{q}^{(i)}})\leq 1$ where the equality holds if and only if the one-bit variant of $\mathbf{p}^{(i)}$ flips at a bit position at which $\mathbf{p}^{(i)}$ and $\mathbf{q}^{(i)}$ differ. 

\item 
If $d(\mathbf{p}^{(i)}, \mathbf{q}^{(i)}) > 2$, we have $C(\mathcal{V}_{\mathbf{p}^{(i)}}, \mathcal{V}_{\mathbf{q}^{(i)}})=0$, since $\tilde{\mathbf{p}}^{(i)}_j \neq \tilde{\mathbf{q}}^{(i)}_k$ for all $j, k$.
\end{itemize}
\end{proof}
 
\begin{corollary}\label{corollary}
The sum of collision count  $C(\mathcal{V}_{\mathbf{p}^{(i)}}, \mathcal{V}_{\mathbf{q}^{(i)}})$ over all $i$ is at most $L$, i.e.,  $\sum_{i=1}^{L} C(\mathcal{V}_{\mathbf{p}^{(i)}}, \mathcal{V}_{\mathbf{q}^{(i)}}) \leq L$.
\end{corollary}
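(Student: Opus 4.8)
The plan is to obtain the bound as an immediate consequence of Proposition~\ref{thm:collision}. First I would recall that Proposition~\ref{thm:collision} establishes the pointwise bound $C(\mathcal{V}_{\mathbf{p}^{(i)}}, \mathcal{V}_{\mathbf{q}^{(i)}}) \leq 1$ for every index $i = 1, 2, \ldots, L$, regardless of the value of $d(\mathbf{p}^{(i)}, \mathbf{q}^{(i)})$. This is exactly the first assertion of that proposition, and it holds uniformly because the four exhaustive cases considered there ($d = 0$, $d = 1$, $d = 2$, and $d > 2$) each yield a collision count of either $0$ or $1$.

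Next I would simply sum the inequality over all $L$ segments. Since the segmentation $\{\mathbf{p}^{(i)}\}$ and $\{\mathbf{q}^{(i)}\}$ partitions the two $D$-bit strings into $L$ non-overlapping pieces, the $L$ collision counts $C(\mathcal{V}_{\mathbf{p}^{(i)}}, \mathcal{V}_{\mathbf{q}^{(i)}})$ are all well defined, and adding the per-segment bounds term by term gives
\begin{equation*}
\sum_{i=1}^{L} C(\mathcal{V}_{\mathbf{p}^{(i)}}, \mathcal{V}_{\mathbf{q}^{(i)}}) \leq \sum_{i=1}^{L} 1 = L ~.
\end{equation*}

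There is no substantive obstacle here: the corollary is a direct aggregation of the per-segment bound already proved, and the only point requiring a moment of care is that the inequality $C \leq 1$ in Proposition~\ref{thm:collision} is stated for \emph{all} $i$ without exception — including the boundary case $d(\mathbf{p}^{(i)}, \mathbf{q}^{(i)}) = 2$, where the count is $0$ or $1$ depending on whether the one-bit variant of $\mathbf{p}^{(i)}$ flips at a mismatched position — so the term-by-term summation is legitimate. If a sharper statement were desired, one could additionally observe that equality $\sum_i C(\mathcal{V}_{\mathbf{p}^{(i)}}, \mathcal{V}_{\mathbf{q}^{(i)}}) = L$ occurs precisely when every segment satisfies $d(\mathbf{p}^{(i)}, \mathbf{q}^{(i)}) \leq 1$, or else has $d(\mathbf{p}^{(i)}, \mathbf{q}^{(i)}) = 2$ with the variant flipping at one of the two mismatched positions; but this refinement is not needed for the stated claim.
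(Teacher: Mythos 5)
Your proof is correct and matches the paper's intent exactly: the corollary is stated without a written proof precisely because it follows by summing the per-segment bound $C(\mathcal{V}_{\mathbf{p}^{(i)}}, \mathcal{V}_{\mathbf{q}^{(i)}}) \leq 1$ from Proposition~\ref{thm:collision} over all $L$ segments, which is what you do. The additional remark on when equality holds is accurate but, as you note, not needed.
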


\begin{corollary}\label{corollary2}
Since $0 \leq d(\mathbf{p}^{(i)}, \mathbf{q}^{(i)})\leq s$, we have $0 \leq d(\mathbf{p}^{(i)}, \mathbf{q}^{(i)}) \leq 2$ given
$C(\mathcal{V}_{\mathbf{p}^{(i)}}, \mathcal{V}_{\mathbf{q}^{(i)}})=1$, and $2 \leq d(\mathbf{p}^{(i)}, \mathbf{q}^{(i)})\leq s$ given $C(\mathcal{V}_{\mathbf{p}^{(i)}}, \mathcal{V}_{\mathbf{q}^{(i)}})=0$.
\end{corollary}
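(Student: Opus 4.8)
The plan is to obtain Corollary~\ref{corollary2} as an immediate consequence of Proposition~\ref{thm:collision}, by intersecting the one-sided distance bounds proved there with the trivial range $0 \le d(\mathbf{p}^{(i)}, \mathbf{q}^{(i)}) \le s$ that holds because each substring $\mathbf{p}^{(i)}$, $\mathbf{q}^{(i)}$ has length $s$ bits (the case of length $s-1$ follows verbatim with $s$ replaced by $s-1$).

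First I would invoke Proposition~\ref{thm:collision} to recall that $C(\mathcal{V}_{\mathbf{p}^{(i)}}, \mathcal{V}_{\mathbf{q}^{(i)}}) \le 1$, so that the two conditioning events $C(\mathcal{V}_{\mathbf{p}^{(i)}}, \mathcal{V}_{\mathbf{q}^{(i)}}) = 1$ and $C(\mathcal{V}_{\mathbf{p}^{(i)}}, \mathcal{V}_{\mathbf{q}^{(i)}}) = 0$ are exhaustive. For the first event, Proposition~\ref{thm:collision} gives $d(\mathbf{p}^{(i)}, \mathbf{q}^{(i)}) \le 2$; combining this with the trivial lower bound $d(\mathbf{p}^{(i)}, \mathbf{q}^{(i)}) \ge 0$ yields the integer interval $[0, 2]$. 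For the second event, Proposition~\ref{thm:collision} gives $d(\mathbf{p}^{(i)}, \mathbf{q}^{(i)}) \ge 2$; combining this with the trivial upper bound $d(\mathbf{p}^{(i)}, \mathbf{q}^{(i)}) \le s$ yields the integer interval $[2, s]$. This is precisely the claimed statement.

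There is no genuine technical obstacle here, since every step is a one-line deduction; the only point I would be careful to spell out is why the value $2$ legitimately appears in both intervals. This is exactly the boundary case $d(\mathbf{p}^{(i)}, \mathbf{q}^{(i)}) = 2$ of Proposition~\ref{thm:collision}, in which a substring collision occurs or not depending on whether the one-bit variant of $\mathbf{p}^{(i)}$ flips at a position where $\mathbf{p}^{(i)}$ and $\mathbf{q}^{(i)}$ disagree. I would close with a brief remark that this overlap is deliberate: it is what keeps the obfuscation interval $[a,b]$ non-degenerate and hence makes $\pi$ in~(\ref{eq:pi}) a nontrivial probability, tying the corollary back to the distance-obfuscation mechanism of Section~\ref{sec:search:mimp}.
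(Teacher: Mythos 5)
Your proposal is correct and matches the paper's reasoning: the paper states Corollary~\ref{corollary2} without a separate proof precisely because it is the immediate intersection of the one-sided bounds from Proposition~\ref{thm:collision} with the trivial range $0 \leq d(\mathbf{p}^{(i)}, \mathbf{q}^{(i)}) \leq s$, which is exactly what you do. Your closing remark on the deliberate overlap at $d(\mathbf{p}^{(i)}, \mathbf{q}^{(i)}) = 2$ is a nice touch but not part of the paper's argument.
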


\begin{thm}[]
\label{thm:sic}   
If $\mathbf{p} \in \mathcal{B}(\mathbf{q}; r)$ with $r < 2L$, then there exists at least one $i$, $1 \leq i \leq L$, such that $C(\mathcal{V}_{\mathbf{p}^{(i)}}, \mathcal{V}_{\mathbf{q}^{(i)}}) = 1$.  
\end{thm}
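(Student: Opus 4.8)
The plan is to argue by contradiction (essentially a pigeonhole count) using the additivity of Hamming distance over the segmentation together with Proposition~\ref{thm:collision} and Corollary~\ref{corollary2}. First I would record the key identity $d(\mathbf{p}, \mathbf{q}) = \sum_{i=1}^{L} d(\mathbf{p}^{(i)}, \mathbf{q}^{(i)})$, which holds because the substrings $\mathbf{p}^{(i)}$ and $\mathbf{q}^{(i)}$ partition the $D$ bit positions of $\mathbf{p}$ and $\mathbf{q}$ into $L$ disjoint blocks, and Hamming distance simply tallies mismatching bit positions block by block.

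Next, suppose for contradiction that no segment index satisfies the claim, i.e., $C(\mathcal{V}_{\mathbf{p}^{(i)}}, \mathcal{V}_{\mathbf{q}^{(i)}}) \neq 1$ for every $i = 1, \ldots, L$. By Proposition~\ref{thm:collision} we already know $C(\mathcal{V}_{\mathbf{p}^{(i)}}, \mathcal{V}_{\mathbf{q}^{(i)}}) \leq 1$, so this supposition forces $C(\mathcal{V}_{\mathbf{p}^{(i)}}, \mathcal{V}_{\mathbf{q}^{(i)}}) = 0$ for all $i$. Then Corollary~\ref{corollary2} (equivalently, the last bullet in the proof of Proposition~\ref{thm:collision}) gives $d(\mathbf{p}^{(i)}, \mathbf{q}^{(i)}) \geq 2$ for every $i$.

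Summing these $L$ inequalities and invoking the additivity identity yields $d(\mathbf{p}, \mathbf{q}) = \sum_{i=1}^{L} d(\mathbf{p}^{(i)}, \mathbf{q}^{(i)}) \geq 2L$. But $\mathbf{p} \in \mathcal{B}(\mathbf{q}; r)$ means $d(\mathbf{p}, \mathbf{q}) \leq r$, and by hypothesis $r < 2L$, so $d(\mathbf{p}, \mathbf{q}) < 2L$ — a contradiction. Hence at least one index $i$ with $1 \leq i \leq L$ must satisfy $C(\mathcal{V}_{\mathbf{p}^{(i)}}, \mathcal{V}_{\mathbf{q}^{(i)}}) = 1$. (Alternatively, one can run this directly: from $\sum_i d(\mathbf{p}^{(i)}, \mathbf{q}^{(i)}) < 2L$, not every term can be $\geq 2$, so some term is $\leq 1$; the first two bullets of the proof of Proposition~\ref{thm:collision} then give $C = 1$ for that index.)

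I do not expect a genuine obstacle here, as the argument is a one-line counting step once additivity of Hamming distance is in place. The only point meriting a little care is that the implication ``$C = 0 \Rightarrow d(\mathbf{p}^{(i)},\mathbf{q}^{(i)}) \geq 2$'' is strictly one-directional — when $d(\mathbf{p}^{(i)},\mathbf{q}^{(i)}) = 2$ the collision count can be either $0$ or $1$ — but since the proof uses only this direction to lower-bound the per-segment distances, no difficulty arises. It is also worth noting that the threshold $r < 2L$ is sharp: if $r = 2L$, one could have $d(\mathbf{p}^{(i)},\mathbf{q}^{(i)}) = 2$ with zero collision count in every segment, so the strict inequality cannot be relaxed.
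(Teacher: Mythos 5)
Your proof is correct and follows essentially the same route as the paper's: assume all collision counts are zero, use Proposition~\ref{thm:collision} to lower-bound each per-segment distance by $2$, sum over the $L$ segments to get $d(\mathbf{p},\mathbf{q}) \geq 2L$, and contradict $r < 2L$. The additional remarks on one-directionality and sharpness of the threshold are accurate but not needed for the argument.
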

\begin{proof}
Suppose that $C(\mathcal{V}_{\mathbf{p}^{(i)}}, \mathcal{V}_{\mathbf{q}^{(i)}}) = 0$ for all $i$. By Proposition \ref{thm:collision}, we must have $d(\mathbf{p}^{(i)}, \mathbf{q}^{(i)})\geq 2$ for all $i$.
Then, $d(\mathbf{p}, \mathbf{q}) =  \sum_{i=1}^{L}d(\mathbf{p}^{(i)}, \mathbf{q}^{(i)})\geq 2L$. That is, we have $\mathbf{p} \notin \mathcal{B}(\mathbf{q}; r)$ with $r< 2L$, which gives a contradiction.
\end{proof}
 
Let $m$ denote the sum of collision count $C(\mathcal{V}_{\mathbf{p}^{(i)}},\mathcal{V}_{\mathbf{q}^{(i)}})$ over all $i$, i.e.,
\begin{equation}
m = \sum_{i=1}^L C(\mathcal{V}_{\mathbf{p}^{(i)}}, \mathcal{V}_{\mathbf{q}^{(i)}}) ~.
\label{eq:m}
\end{equation} 
Proposition \ref{thm:sic} suggests that, under the condition $r<2L$, the MIMP scheme ensures that any $r$-neighbor of the query $\mathbf{q}$ has a non-zero value of $m$. 
We shall see that $m$ serves as an obfuscated distance measure and is key to the design of the inference-based $r$-neighbor detection approach in our context. Note that $m\leq L$ from Corollary~\ref{corollary}. Without loss of generality, let us assume that 
$C(\mathcal{V}_{\mathbf{p}^{(i)}}, \mathcal{V}_{\mathbf{q}^{(i)}})=1$ for $i=1, 2, ..., m$ and $C(\mathcal{V}_{\mathbf{p}^{(i)}}, \mathcal{V}_{\mathbf{q}^{(i)}})=0$ for $i=m+1, ..., L$.
Then, by Corollary~\ref{corollary2}, we have $ 0 \leq \sum_{i=1}^{m} d(\mathbf{p}^{(i)}, \mathbf{q}^{(i)}) \leq 2m $ and $2(L-m)  \leq \sum_{i=m+1}^{L}  d(\mathbf{p}^{(i)}, \mathbf{q}^{(i)})  \leq s(L-m)$. 
Since by definition
 \begin{equation}
 d(\mathbf{p}, \mathbf{q}) = \sum_{i=1}^{m} d(\mathbf{p}^{(i)}, \mathbf{q}^{(i)}) +\sum_{i=m+1}^{L} d(\mathbf{p}^{(i)}, \mathbf{q}^{(i)})
 \end{equation}
we know that $d(\mathbf{p}, \mathbf{q})$ is within the range given by 
 \begin{equation}
 \label{eq:dm}
 2(L-m) \leq d(\mathbf{p}, \mathbf{q})  \leq s(L-m)+2m ~.
 \end{equation} 
Substituting $a= 2(L-m)$ and $b=s(L-m)+2m$ in (\ref{eq:pi}), we obtain
 \begin{equation}
 \pi   = \frac{2 - (2L - 1 - r)/m}{ (sL - 2L +1)/m   + 4-s}
      \label{eq:hashpi}  
 \end{equation}
for $2(L-m)  \leq r  \leq s(L-m)+2m$ or, equivalently, 
\begin{equation}
L-\frac{r}{2} \leq m \leq \frac{sL - r}{s-2} ~.
\label{eq:mrange1}
\end{equation}
Note that, with $r<2L$, we must have $(sL-r)/(s-2) > L$. Since $m\leq L$ by Corollary \ref{corollary}, we rewrite (\ref{eq:mrange1}) as 
\begin{equation}
L-\frac{r}{2} \leq m \leq L~.
\label{eq:mrange}
\end{equation}
In addition, with $s\geq 2$, $\pi$ in the form of \eqref{eq:hashpi} monotonically increases as $m$ increases. Accordingly, based on the definition of $\pi$, a larger value of $m$ corresponds to a higher probability that $\mathbf{p}$ is an $r$-neighbour of $\mathbf{q}$.

As a result of \eqref{eq:mrange}, any string $\mathbf{p}\in \Omega$ with $m< L-r/2$ is not an $r$-neighbour of the query $\mathbf{q}$. On the other hand, all strings $\mathbf{p}\in \Omega$ with $m$ being in the inference region defined by (\ref{eq:mrange}) are subjected to the hypothesis testing, from which we can obtain through \eqref{eq:hashpi} the empirical probability mass functions of $\pi$ under hypothesis $H_0$ and hypothesis $H_1$, respectively, and then choose a value for the threshold $\eta$ to set the channel characteristics $\lambda_0$ and $\lambda_1$.

\begin{figure}[t]
\centering
\includegraphics[width=.9\linewidth]{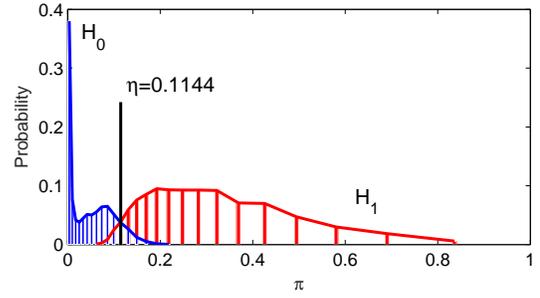}
\caption{Empirical probability mass functions of $\pi$ under $H_0$ and $H_1$, respectively, with the UBIRIS dataset, given $L=30$ and $r=50$. Choosing $\eta=0.1144$ minimizes $\lambda_0+\lambda_1$.}
\label{fig:pi} 
\end{figure}

For example, let us consider the UBIRIS dataset \cite{UBath} and use 400 eyes each with 10 iris images. Each iris image is represented by an iris code of 400 bits generated using a rotation invariant ordinal feature coding scheme \cite{Sun09}. We enroll each eye with one iris code at a time, and then use another iris code from the same eye to search. This is run for 24 times with different enrollments and queries, which results in a total of 3,840,000 pairwise comparisons, including 6,026 neighbouring pairs and 3,833,974 non-neighbouring pairs as ground truth, given the Hamming radius $r=50$.

We set $L=30$ and perform the MIMP scheme in each run. All enrolled strings whose $m$ value with respect to corresponding queries satisfying (\ref{eq:mrange}), which is $5\leq m \leq 30$ in this example, are subjected to inference. This involves 6,026 neighbouring pairs and 5,859 non-neighbouring pairs. Comparing with the ground truth, the inference set includes \emph{all} neighbouring pairs and a small fraction (i.e., 0.15 percent) of non-neighbouring pairs. By (\ref{eq:hashpi}), the $\pi$ values can be evaluated for these instances. Fig. \ref{fig:pi} plots the empirical probability mass functions of $\pi$ under $H_0$ and $H_1$, respectively.  

We choose the threshold $\eta$ to minimize $\lambda_0+\lambda_1$ for the binary channel, which we recall is the upper bound of the hypothesis testing error probability $P_{\rm F}(\delta) + P_{\rm M}(\delta)$ as a result of the chosen decision rule $\delta(y)=y$ discussed in Section \ref{sec:search:NPtest}. 
Note that we may also choose a smaller $\eta$ value to increase the detection probability $P_{\rm D}(\delta)$, i.e., $1-\lambda_1$, by allowing a larger $\lambda_0$. In Fig. \ref{fig:pi}, $\eta=0.1144$, which yields $\lambda_0=0.082$ and $\lambda_1 = 0.036$. 
Given the $\eta$ value, the inference-based approach determines if an enrolment is an $r$-neighbour of the query by testing if its $\pi$ value satisfies $\pi \geq \eta$. Accordingly, taking into account all the 3,840,000 pairwise comparisons in this example, the overall detection probability is 96.4 percent and the overall false-alarm probability is 0.0125 percent.

\subsection{Test-Based Rank-Ordered Search}
\label{sec:search:rank}

Conventionally, a rank-ordered search is done by ordering the retrieved candidates according to their distance values. In this section, we present a test-based rank-ordered search scheme without the need of comparing distance values. The basic idea is to apply our proposed inference-based $r$-neighbour detection approach and evaluate a series of {\rm nested} inference regions with sequentially increased values of the Hamming radius $r$.

Take the UBIRIS dataset for example. Let us consider three values for the Hamming radius $r$ as shown in Table~\ref{table:rank}, i.e., $r_1=30$, $r_2=40$, $r_3=50$. We perform the MIMP scheme with $L=30$. Thus, the inference region $\mathcal{I}$ defined by \eqref{eq:mrange} for each corresponding $r$ value is $\mathcal{I}_1=[15,30]$, $\mathcal{I}_2=[10,30]$, $\mathcal{I}_3=[5,30]$, respectively. Note that in the case of the MIMP scheme, as long as $L$ is fixed, the $m$ value of any enrollment remains the same regardless of the $r$ value. For each value of $r$, we obtain through \eqref{eq:hashpi} the empirical probability mass functions of $\pi$, and choose the threshold $\eta$ to minimize $\lambda_0+\lambda_1$. For the UBIRIS dataset, it turns out that $\eta_1=0.1064$, $\eta_2=0.1110$, $\eta_3=0.1144$, respectively.
 
\begin{table}[t] 
\centering
\caption{Inference-based $r$-neighbour detection over the UBIRIS dataset}
\label{table:rank} 
\renewcommand{\arraystretch}{1.1} 
\begin{tabular}[c]{|l|c|c|c|} 
\hline
Hamming radius $r$  & 30 & 40  & 50 \\
\hline
Inference region $\mathcal{I}$  & $[15, 30]$ & $[10, 30]$  & $[5, 30]$ \\
\hline
Threshold $\eta$ for $\pi$ &  0.1064 & 0.1110  & 0.1144 \\
\hline
Threshold $\mu$ for $m$  &  22  &  19   & 16  \\ 
\hline
Overall detection probability  & 96.5\% & 95.9\%  & 96.4\%  \\
\hline 
Overall false-alarm probability &  0.0186\% &  0.0140\%  &  0.0125\%  \\
\hline 
\end{tabular}
\end{table} 

Once the $\eta$ values are learned, we can determine if an enrollment $\mathbf{p}$ is an $r$-neighbour of the query $\mathbf{q}$ by testing if the $\pi$ value of $\mathbf{p}$ satisfies $\pi \geq \eta$. Given $\pi$ in the form of \eqref{eq:hashpi}, it can be shown that testing $\pi \geq \eta$ is equivalent to testing 
\begin{equation}\label{eq:mc}
m \geq \frac{\eta(sL-2L+1) + 2L-1-r}{2 + \eta(s-4)} \stackrel{\rm def}{=} \mu ~. 
\end{equation}
Note in \eqref{eq:mc} that it suffices for $\mu$ to assume integer values since $m$ in this context takes on integer values only. Thus, for the UBIRIS dataset, we have $\mu_1=22$, $\mu_2=19$, $\mu_3=16$, respectively. The resulting detection probability and false-alarm probability of $r$-neighbour detection are provided in Table~\ref{table:rank} for each value of $r$.

Given $\mu_1>\mu_2>\mu_3$ and since the inference region $\mathcal{I}$ is nested, i.e., $\mathcal{I}_1 \subset \mathcal{I}_2 \subset \mathcal{I}_3$, the inferred set $\mathcal{B}'(\mathbf{q}; r)$ of $r$-neighbours of the query $\mathbf{q}$ in this case must also be nested, i.e., $\mathcal{B}'(\mathbf{q}; r_1) \subset \mathcal{B}'(\mathbf{q}; r_2) \subset \mathcal{B}'(\mathbf{q}; r_3)$. In general, if we consider $G$ sequentially increased values of the Hamming radius $r$ and given that the inferred set of $r$-neighbours is nested, we can order the retrieved candidates in $G$ ranks. In particular, those in the set $\mathcal{B}'(\mathbf{q}; r_1)$ are designated as Rank 1 candidates. Subsequently, for $g=2,3,\ldots,G$, those in $\mathcal{B}'(\mathbf{q}; r_g)-\mathcal{B}'(\mathbf{q}; r_{g-1})$ form the set of Rank $g$ candidates.

As a special case, an approximate $k$-NN search can be performed by finding the smallest $G$ such that the set $\mathcal{B}'(\mathbf{q}; r_G)$ contains at least $k$ candidates. We return all candidates in the top $G-1$ ranks, and the remaining candidates are chosen from Rank $G$ candidates arbitrarily. The latter is due to the nature of binary hypothesis testing where candidates in the same rank are considered to be of the same similarity level to the query regardless of their $m$ values. It is interesting to observe that, in the case of the test-based rank-ordered search, given $\mu_i > \mu_j$, the $m$ values of Rank $i$ candidates must be larger than those of Rank $j$ candidates. This observation motivates us to further consider a simple implementation of $k$-NN search where we simply choose the top $k$ candidates in descending order of the $m$ value without the need of test-based ranking.

\section{Privacy-Preserving Index Generation}
\label{sec:index}
 
In Section \ref{sec:search}, we introduced an inference-based approach for similarity search. The MIMP scheme works by matching piecewise obfuscated binary codes and enables hash-based indexing of biometric identities. This can be regarded as \emph{substring obfuscation} by introducing controlled ``noise'' to piecewise biometric data before disseminating them over hash tables. However, as shown in Section \ref{sec:analysis:gain}, from the perspective of privacy protection, such segments of biometric data should not be used directly as search indexes and stored in hash tables. Cryptographic hashing may be used to generate secure signatures. However, standard cipher codes typically have hundreds of bits, e.g., 256 bits by SHA-256, while codes of length longer than 32 bits are not suitable for indexing data structures \cite{Norouzi14}. Subdivision of cipher codes does not help as Hamming distances are not preserved after encryption. In this paper, we address the problem by indexing in randomized Montgomery domains.

\subsection{Indexing in Montgomery Domains}
\label{sec:index:mont}

Montgomery multiplication has been used in cryptographic schemes mainly for implementing fast modulo operations with large-integer arithmetic \cite{Crandall01}. Here, we exploit its elementary form to generate randomized signatures. Specifically, let $N$ be a prime number, and let $R$ be a positive integer that is coprime to $N$, i.e., ${\rm gcd} (R, N)=1$.  
The \emph{Montgomery form} of an integer $x$ is defined as
\begin{equation}
M(x; R, N) = xR \mod N
\label{eq:MR}
\end{equation}  
where $N$ is the modulus and $R$ is the multiplier. We also call $M(x; R, N)$ as the \emph{$(R, N)$-residue} of $x$ following Definition 9.2.2 in \cite{Crandall01}. In our context, for a binary string $\mathbf{p}^{(i)}$ of length $s$ bits, its Montgomery form can be evaluated by the sum of  $(R, N)$-residues at each one-bit position in $\mathbf{p}^{(i)}$. That is,
\begin{equation}
\begin{split}
M(\mathbf{p}^{(i)}; R, N) & = \big(\sum_{b=1}^{s} \mathbf{p}^{(i)}[b] \cdot 2^{b-1} \big) R \mod N \\
         = \sum_{b=1}^{s} & \mathbf{p}^{(i)}[b] \cdot M(2^{b-1}; R, N)  \mod N 
\end{split}   
\label{eq:MR_p}
\end{equation}
where $\mathbf{p}^{(i)}[b]\in \{0,1\}$. The reference $M(2^{b-1}; R, N)$ may be pre-computed for large-integer arithmetic. We regard (\ref{eq:MR_p}) as a projection of  $\mathbf{p}^{(i)}$ into the Montgomery domain of $(R, N)$-residues, also called the $(R, N)$-domain for brevity. Note that Montgomery multiplication does not preserve the original distance values and thus the resulting signatures can withstand adversarial similarity analysis.

We consider a three-party scenario that involves a user, a data owner and a server, which is typical in a networked computing environment like cloud \cite{CloudID15}. The data owner is responsible for user enrolment, server registration and index generation. The server maintains the hash tables and provides the computation-intensive task of similarity search on behalf of the data owner. Upon a query from the user, the server retrieves the most likely candidates and returns the search results to the user. Fig. \ref{fig:proto} illustrates the main procedures of our three-party protocol that enables similarity search in randomized Montgomery domains. They are described below in more details. 
 
\begin{figure}[t]
\centering
\includegraphics[width=\linewidth]{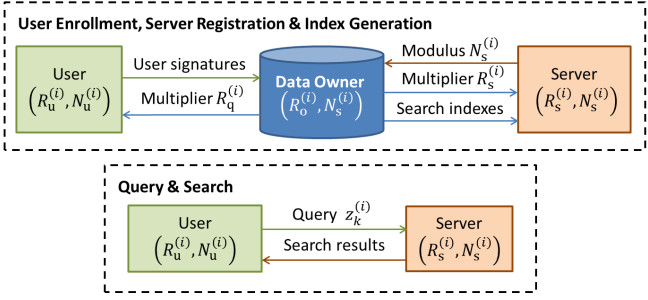}
\caption{Main procedures of the three-party search protocol.} 
\label{fig:proto} 
\end{figure} 

\textbf{User Enrolment:} The user generates the binary string~$\mathbf{p}$ from biometric data and divides it into $L$ substrings $\{\mathbf{p}^{(i)}\}$, each of which is associated with a set $\mathcal{V}_{\mathbf{p}^{(i)}}$ as described in Section \ref{sec:search:mimp}. For each chosen modulus $N^{(i)}_{\rm u}$, the user randomly generates a multiplier $R^{(i)}_{\rm u}$ that is coprime to $N^{(i)}_{\rm u}$, and maps all $\tilde{\mathbf{p}}_j^{(i)}\in\mathcal{V}_{\mathbf{p}^{(i)}}$ into the $(R^{(i)}_{\rm u}, N^{(i)}_{\rm u})$-domain. Then, the user sends $M(\tilde{\mathbf{p}}_j^{(i)}; R_{\rm u}^{(i)}, N^{(i)}_{\rm u})$ as the signature of $\tilde{\mathbf{p}}_j^{(i)}$ for all $i, j$ to the data owner.

\textbf{Server Registration:} 
For provision of service, the server registers at the data owner with a set of moduli $\{N^{(i)}_{\rm s}\}$. Then, for each $i$, the data owner randomly generates a multiplier $R_{\rm o}^{(i)}$ that is coprime to $N^{(i)}_{\rm s}$ for index generation.

\textbf{Index Generation:} The data owner maps the user's signature $M(\tilde{\mathbf{p}}_j^{(i)}; R_{\rm u}^{(i)}, N^{(i)}_{\rm u})$ into the $(R^{(i)}_{\rm o}, N^{(i)}_{\rm s})$-domain, and then sends the nested Montgomery form $M( M(\tilde{\mathbf{p}}_j^{(i)}; R^{(i)}_{\rm u}, N^{(i)}_{\rm u}); R_{\rm o}^{(i)}, N_{\rm s}^{(i)})$ for all $\tilde{\mathbf{p}}_j^{(i)}\in\mathcal{V}_{\mathbf{p}^{(i)}}$ to the server for indexing $\mathbf{p}^{(i)}$ in the $i$-th hash table. 
To enable the user to search at the server, the data owner randomly generates for each $i$ a multiplier $R^{(i)}_{\rm q}$ that is coprime to $N^{(i)}_{\rm s}$, and sends $R^{(i)}_{\rm q}$ for all $i$ to the user. Let $R'^{(i)}_{\rm q}$ be a \emph{modular multiplicative inverse} of $R^{(i)}_{\rm q}$ modulo $N^{(i)}_{\rm s}$, i.e.,
\begin{equation}
R^{(i)}_{\rm q} R'^{(i)}_{\rm q}\mod N^{(i)}_{\rm s} = 1 ~.
\label{eq:inverse}
\end{equation}
The data owner also computes
\begin{equation}
R^{(i)}_{\rm s} = M(R'^{(i)}_{\rm q}; R^{(i)}_{\rm o}, N^{(i)}_{\rm s})
\label{eq:Rs}
\end{equation}
and sends the resulting multiplier $R_{\rm s}^{(i)}$ for all $i$ to the server.

\textbf{Query and Search:} The user divides the query $\mathbf{q}$ into $L$ substrings and, for each substring $\mathbf{q}^{(i)}$, generates a set $\mathcal{V}_{\mathbf{q}^{(i)}}$ as described in Section \ref{sec:search:mimp}. For each $i$, the user maps all $\tilde{\mathbf{q}}^{(i)}_k \in \mathcal{V}_{\mathbf{q}^{(i)}}$ into the $(R^{(i)}_{\rm u}, N^{(i)}_{\rm u})$-domain, and computes 
\begin{equation}
z_k^{(i)} = M(\tilde{\mathbf{q}}_k^{(i)}; R^{(i)}_{\rm u}, N^{(i)}_{\rm u}) R_{\rm q}^{(i)}
\label{eq:zk}
\end{equation}
using the multiplier $R^{(i)}_{\rm q}$. The user sends $z_k^{(i)}$ to the server who in turn maps it into the $(R^{(i)}_{\rm s}, N^{(i)}_{\rm s})$-domain. Note that the result $M( z_k^{(i)}; R^{(i)}_{\rm s}, N^{(i)}_{\rm s})$ and the nested Montgomery form of $\tilde{\mathbf{q}}_k^{(i)}$, i.e., $M(M(\tilde{\mathbf{q}}_k^{(i)}; R^{(i)}_{\rm u}, N^{(i)}_{\rm u});  R^{(i)}_{\rm o}, N^{(i)}_{\rm s})$, are equivalent. This is because, by (\ref{eq:MR_p})-(\ref{eq:zk}) and using the congruence relation 
$a(b ~~{\rm mod}~ n) \equiv ab ~~{\rm mod}~  n ~~({\rm mod}~  n)$
that holds for all positive integers $a, b, n$, we have
\begin{equation}
\begin{split} 
M( &z_k^{(i)} ; R^{(i)}_{\rm s}, N^{(i)}_{\rm s}) = M( M(\tilde{\mathbf{q}}_k^{(i)}; R^{(i)}_{\rm u}, N^{(i)}_{\rm u}) R_{\rm q}^{(i)}; R^{(i)}_{\rm s}, N^{(i)}_{\rm s}) \\
=& M(\tilde{\mathbf{q}}_k^{(i)}; R^{(i)}_{\rm u}, N^{(i)}_{\rm u}) R^{(i)}_{\rm q} (R'^{(i)}_{\rm q} R^{(i)}_{\rm o} ~{\rm mod}~ N^{(i)}_{\rm s}) \mod N^{(i)}_{\rm s} \\
=& M(\tilde{\mathbf{q}}_k^{(i)}; R^{(i)}_{\rm u}, N^{(i)}_{\rm u}) R^{(i)}_{\rm q} R'^{(i)}_{\rm q} R^{(i)}_{\rm o} ~{\rm mod}~ N^{(i)}_{\rm s} \mod N^{(i)}_{\rm s} \\ 
=& M(\tilde{\mathbf{q}}_k^{(i)}; R^{(i)}_{\rm u}, N^{(i)}_{\rm u}) R^{(i)}_{\rm o} (R^{(i)}_{\rm q} R'^{(i)}_{\rm q} ~{\rm mod}~ N^{(i)}_{\rm s}) \mod N^{(i)}_{\rm s} \\
=& M(\tilde{\mathbf{q}}_k^{(i)}; R^{(i)}_{\rm u},  N^{(i)}_{\rm u}) R^{(i)}_{\rm o}  ~{\rm mod}~ N^{(i)}_{\rm s} \\
=& M(M(\tilde{\mathbf{q}}_k^{(i)}; R^{(i)}_{\rm u}, N^{(i)}_{\rm u});  R^{(i)}_{\rm o}, N^{(i)}_{\rm s})~.
\end{split} 
\label{eq:ybar}   
\end{equation}
Thus, the server can use the $(R^{(i)}_{\rm s}, N^{(i)}_{\rm s})$-residue of $z_k^{(i)}$ to probe the search indexes in the $i$-th hash table and perform the MIMP scheme as described in Section \ref{sec:search:mimp}. 

\subsection{Substring Collision in Montgomery Domains}
\label{sec:index:hash}

Note that, given a substring collision between $\tilde{\mathbf{p}}^{(i)}_j \in \mathcal{V}_{\mathbf{p}^{(i)}}$ and $\tilde{\mathbf{q}}^{(i)}_k \in \mathcal{V}_{\mathbf{q}^{(i)}}$, we must have 
\begin{equation}
\begin{split}
M(M(\tilde{\mathbf{p}}_k^{(i)}; & R^{(i)}_{\rm u}, N^{(i)}_{\rm u});  R^{(i)}_{\rm o}, N^{(i)}_{\rm s}) = \\
& M(M(\tilde{\mathbf{q}}_k^{(i)}; R^{(i)}_{\rm u}, N^{(i)}_{\rm u});  R^{(i)}_{\rm o}, N^{(i)}_{\rm s})~.
\end{split}
\label{eq:MM}
\end{equation}
However, having \eqref{eq:MM} does not necessarily imply $\tilde{\mathbf{p}}_j^{(i)} = \tilde{\mathbf{q}}_k^{(i)}$. Therefore, collision detection based on \eqref{eq:MM} may result in false positives but \emph{no} false negatives. In other words, matching in Montgomery domains may increase the false-alarm probability but have \emph{no} effect on the detection probability in our inference-based similarity search scheme. To reduce the false positives, we enhance the search protocol by indexing in Montgomery domains using \emph{multiple} independently generated signatures rather than one.
The idea can be conveniently explained by considering two arbitrary substrings of length $s$ bits as follows. 

Let $x$ and $y$ be the natural number representation of the two substrings, respectively. We first consider $T$ randomly chosen pairs of $(R_t, N_t)$, $t=1, 2, \ldots, T$, each yielding a Montgomery form for $x$ and $y$. For convenience, let $\gamma_t (x)=M(x; R_t, N_t)$ and $\gamma_t (y)=M(y; R_t, N_t)$, and let $\boldsymbol{\gamma}(x) = \{\gamma_t(x)\}$ and $\boldsymbol{\gamma}(y) = \{\gamma_t(y)\}$ denote the set of $T$ randomized signatures for $x$ and $y$, respectively. In this way, we consider the two substrings matched, i.e., $x=y$, if and only if $\boldsymbol{\gamma}(x)=\boldsymbol{\gamma}(y)$, i.e., $\gamma_t(x)=\gamma_t(y)$ for all $t$. 

Suppose that $R_t$ is encoded by $c_{\rm R}$ bits, i.e., $0<R_t < 2^{c_{\rm R}}$, and $N_t$ is encoded by $c_{\rm N}$ bits, i.e., $0<N_t < 2^{c_{\rm N}}$. Then, the binary code of $xR_t$ and $yR_t$ has $s+c_{\rm R}$ bits. Let $\nu(2^{c_{\rm N}})$ be the number of primes in the range $(0, 2^{c_{\rm N}})$. We choose each modulus $N_t$ independently from the $\nu(2^{c_{\rm N}})$ prime numbers at random with equal probability. A prime number is ``bad'' for matching $x$ and $y$ if the random choice of $N_t$ results in a false positive, i.e., $\gamma_t(x)=\gamma_t(y)$ given $x\neq y$. It is known that, for $s+c_{\rm R} < \nu(2^{c_{\rm N}})$, there can be \emph{at most} $s+c_{\rm R}-1$ such ``bad'' prime numbers in the range $(0, 2^{c_{\rm N}})$ \cite{Hromkovic05}. Since every prime number has an equal probability of being chosen, the probability $\Pr\{\gamma_t(x) = \gamma_t(y) \:|\: x\neq y \}$ is bounded by
\begin{equation}
\frac{s + c_{\rm R} -1}{\nu(2^{c_{\rm N}})}  \stackrel{\rm def}{=}  \beta(s; c_{\rm R}, c_{\rm N})  ~.
\label{eq:beta}
\end{equation}
The false positive probability $\Pr\{\boldsymbol{\gamma}(x) = \boldsymbol{\gamma}(y) \:|\: x \neq y \}$ is therefore bounded by $\beta(s; c_{\rm R}, c_{\rm N})^T$, which can be controlled by tuning the parameters $c_{\rm R}$, $c_{\rm N}$ and $T$ for a given $s$.

Next, consider the nested Montgomery form for $x$ and for $y$. That is, for each $\gamma_t (x)$ and $\gamma_t (y)$, we further choose $T$ pairs of $(R_v, N_v)$, $v=1, 2, \ldots, T$, independently at random. Each pair of $(R_v, N_v)$ yields the nested Montgomery form $M(\gamma_t (x); R_v, N_v)$ for $x$ and $M(\gamma_t (y); R_v, N_v)$ for $y$. For convenience, let $\psi_{tv}(x) = M(\gamma_t (x); R_v, N_v)$ and $\psi_{tv}(y) = M(\gamma_t (y); R_v, N_v)$.
Accordingly, we generate a set of $T^2$ randomized signatures for $x$ and $y$, denoted by $\boldsymbol{\psi}(x) = \{ \psi_{tv}(x)\}$ and $\boldsymbol{\psi}(y) = \{ \psi_{tv}(y)\}$, respectively. In this way, we consider $x=y$ if and only if $\boldsymbol{\psi}(x)=\boldsymbol{\psi}(y)$, i.e., $\psi_{tv}(x)=\psi_{tv}(y)$ for all $t$ and $v$. Since $N_t<2^{c_{\rm N}}$, both $\gamma_t (x)$ and $\gamma_t (y)$ can be encoded by $c_{\rm N}$ bits. Setting $s=c_{\rm N}$ in (\ref{eq:beta}) yields
$\Pr\{\psi_{tv} (x)=\psi_{tv} (y) \:|\: \gamma_{t} (x)\neq\gamma_t (y) \} \leq \beta(c_{\rm N}; c_{\rm R}, c_{\rm N})$. Then, it can be shown that $\Pr\{\psi_{tv} (x)=\psi_{tv} (y) \:|\: x \neq y \}$ is bounded by $\beta(s+c_{\rm N}; 2c_{\rm R}, c_{\rm N})$. The resulting false positive probability $\Pr \{ \boldsymbol{\psi}(x) = \boldsymbol{\psi}(y) \:|\:  x \neq y \}$ in the case of using nested Montgomery forms as search indexes is therefore bounded by $\beta (s+c_{\rm N}; 2 c_{\rm R}, c_{\rm N})^{T^2}$.

Fig. \ref{fig:beta} plots the bound $\beta (s+c_{\rm N}; 2 c_{\rm R}, c_{\rm N})^{T^2}$ with respect to the parameter $c_{\rm N}$ for $T=1, 2, 3$, given $s=14$ and $c_{\rm R} = 15$. It can be seen that the bound reduces drastically as $c_{\rm N}$ or $T$ increases. For example, with $T=2$, $c_{\rm N}=15$ and hence $\nu(2^{15})=3512$, the false positive probability is less than $7.44\times 10^{-8}$. In this paper, we use $T=2$ and $c_{\rm N}=15$ in our experimental settings. 

\begin{figure}[t] 
\centering
\includegraphics[width=.9\linewidth]{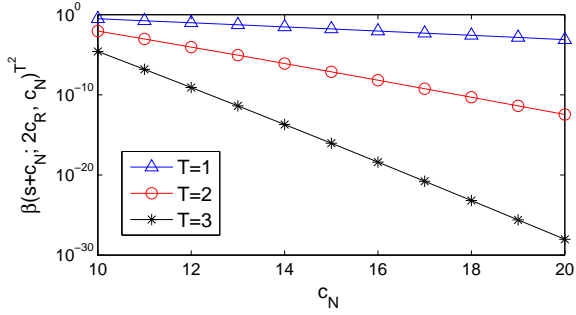}
\caption{Bound $\beta (s+c_{\rm N}; 2 c_{\rm R}, c_{\rm N})^{T^2}$ with respect to the modulus encoding length $c_{\rm N}$ for $T=1, 2, 3$, given $s=14$ and $c_{\rm R}=15$.}
\label{fig:beta} 
\end{figure}

\section{Privacy-Preserving Strength}
\label{sec:analysis}

In this section, we study the privacy-preserving strength of our inference-based framework for similarity search in randomized Montgomery domains. In particular, we are concerned with adversarial learning of the biometric similarity information that can be gleaned from the search indexes and the retrieval process. We begin by presenting in Section~\ref{sec:analysis:model} a security analysis of the three-party search protocol to understand what an attacker can do if they know the various primitives (i.e., multipliers and moduli) in each indexing step. We argue that the privacy threats may all boil down to the problem of how hard it is for the attacker to infer the biometric identity from the user's signatures in Montgomery forms. This motivates us to apply an information-theoretic approach for evaluating in Section~\ref{sec:analysis:entropy} the information leakage in Montgomery domains and in Section~\ref{sec:analysis:gain} the privacy gain in index generation. We further provide in Section~\ref{sec:analysis:tradeoff} a discussion on the well-known trade-off between privacy and utility that is applied also in our context of similarity search. 
 
\subsection{Security Analysis of the Search Protocol}
\label{sec:analysis:model}

In our design of the three-party search protocol, we consider a model where the different parties do not necessarily trust each other. We also assume that 1) the three parities do not collude with each other, 2) the communication channel between the parties is secure (e.g., encrypted), and 3) each party may behave in a curious-but-honest way, i.e., it follows the search protocol but may use any data in possession to glean additional information \cite{Weng16tkde,Weng15tifs,Biggio15sp,Bringer15}. 
As shown in Fig. \ref{fig:proto}, the three parties operate in separate Montgomery domains defined by the corresponding primitive pair, i.e., $(R_{\rm u}^{(i)}, N^{(i)}_{\rm u})$ for the user, $(R_{\rm o}^{(i)}, N^{(i)}_{\rm s})$ for the data owner, and $( R_{\rm s}^{(i)}, N^{(i)}_{\rm s})$ for the server. The following discusses the privacy threat in each main procedure of the search protocol.

\textbf{User Enrolment:}
The attacker may seize user signatures in the form of $M(\tilde{\mathbf{p}}_j^{(i)}; R_{\rm u}^{(i)}, N^{(i)}_{\rm u})$, but is not likely to know the user's primitive pair $(R_{\rm u}^{(i)}, N^{(i)}_{\rm u})$, as the latter is kept private to the user. In Section \ref{sec:analysis:entropy}, we show that recovering $\tilde{\mathbf{p}}_j^{(i)}$ from its Montgomery form without knowing the primitives can be made almost as difficult as a wild guess. On the other hand, even if the attacker is able to crack user signatures, it needs to do so for a sufficiently large number of users in order to derive the underlying biometric similarity distribution. This makes the inference cost of adversarial learning even more prohibitive.

\textbf{Server Registration:} 
We note that, if different application servers use the same set of moduli $\{N^{(i)}_{\rm s}\}$ to register provision of service at the data owner, the attacker can track users covertly by cross-matching over different applications, known as \emph{linkage attack} \cite{Nandakumar15sp}. To deal with this privacy threat, the servers can choose different sets of  moduli $\{N^{(i)}_{\rm s}\}$, so that the data owner can produce distinct sets of search indexes for the same biometric database.  

\textbf{Index Generation:} 
The attacker may sniff all the search indexes generated by the data owner. Since the search indexes are generated in the nested Montgomery form $M( M(\tilde{\mathbf{p}}_j^{(i)}; R^{(i)}_{\rm u}, N^{(i)}_{\rm u}); R_{\rm o}^{(i)}, N_{\rm s}^{(i)})$, the attacker cannot derive the biometric similarity distribution directly from the search indexes, as Montgomery multiplication does not preserve the original distance values. The attacker may alternatively attempt to glean the information by inferring each biometric identity from the search indexes and then performing pairwise comparisons for distance computation.
In this way, the attacker needs to compromise two layers of Montgomery multiplications defined in the respective Montgomery domains.
We note that secure communications can prevent the attacker from obtaining the primitives $N^{(i)}_{\rm s}$, $R^{(i)}_{\rm s}$ and $R^{(i)}_{\rm q}$. Otherwise, the attacker may derive the data owner's primitive $R^{(i)}_{\rm o}$, which will enable the attacker to obtain $M(\tilde{\mathbf{p}}_j^{(i)}; R_{\rm u}^{(i)}, N^{(i)}_{\rm u})$. The privacy threat is again reduced to cracking the user's signatures as discussed before. 

\begin{figure*}[t]
\centering
\begin{minipage}[t]{0.48\linewidth}
\centering
\includegraphics[width=\linewidth]{figure5.eps} 
\caption{$I(X; \Gamma_1, \Gamma_2)$ with respect to the multiplier encoding length $c_{\rm R}$ for $s=4, 7, 14$. The mean and standard deviation are obtained with ten pairs of prime moduli $N_1$ and $N_2$ randomly chosen in $(0, 2^{15})$.}
\label{fig:entropy:sR}  
\end{minipage}
\hspace{6pt}
\begin{minipage}[t]{0.48\linewidth}
\centering
\includegraphics[width=\linewidth]{figure6.eps}
\caption{$I(X; \Gamma_1, \Gamma_2)$ with respect to the prime moduli $N_1$ and $N_2$ chosen to span the range $(0, 2^{15})$, given $s=7$ and $c_{\rm R}=15$. 
}
\label{fig:entropy:N}  
\end{minipage}
\end{figure*}   

\textbf{Query and Search:} Recall that the obfuscated distance measure is key to our design of the inference-based similarity search scheme. It enables the server to retrieve most likely candidates without knowing the exact distance values. The attacker may attempt to glean the similarity information from the obfuscated distance measure. As a result, the risk of data disclosure in the retrieval process also depends on how different the distribution of the obfuscated distance values can be made from that of the exact distance values, which we will discuss in Section \ref{sec:analysis:tradeoff}.

\subsection{Information Leakage in Montgomery Domains}
\label{sec:analysis:entropy}
 
In privacy studies, \emph{mutual information} is often used to measure the average risk of data disclosure \cite{Sweeney02}. Accordingly, we quantify information leakage in our context as the mutual information between a substring data variable and its independently generated signatures in Montgomery domains. Specifically, consider a substring of length $s$ bits whose natural number representation is a discrete random variable $X$ with probability mass function $p(x) = 2^{-s}$ for all $x$ in the range $\mathcal{X}=[0, 2^{s})$. The entropy of $X$, denoted by $H(X)$, is thus $s$ bits \cite{Cover06}. For $t=1, 2$ and given the modulus $N_t$, the Montgomery form of $X$, denoted by $\Gamma_t$, is also a discrete random variable. In particular, let $\mathfrak{R}(N_t)$ be the set of all eligible values of the multiplier $R_t$ in the range $(0, 2^{c_{\rm R}})$ that are coprime to $N_t$. Then, for $X=x$, every $R_t\in\mathfrak{R}(N_t)$ yields an $(R_t, N_t)$-residue, i.e., $\gamma_t =  M(x; R_t, N_t)$. Let all possible values of $\Gamma_t$ constitute the subset $\mathcal{G}_t \subset [0, N_t)$. The mutual information between $X$ and its Montgomery forms $\Gamma_1, \Gamma_2$, denoted by $I(X; \Gamma_1, \Gamma_2)$, is then given by \cite{Cover06}
\begin{equation}
I(X; \Gamma_1, \Gamma_2) = H(X) - H(X|\Gamma_1, \Gamma_2) ~.
\label{eq:mutual}
\end{equation}
The conditional entropy $H(X|\Gamma_1, \Gamma_2)$ can be obtained by the \emph{chain rule for entropy} \cite{Cover06} as
\begin{equation}
H(X|\Gamma_1, \Gamma_2) = H(\Gamma_1, \Gamma_2, X) - H(\Gamma_1, \Gamma_2)
\label{eq:equivocation}
\end{equation}
where $H(\Gamma_1, \Gamma_2, X)$ and $H(\Gamma_1, \Gamma_2)$ can be obtained from the joint distribution $p(x, \gamma_1, \gamma_2)$ as follows. Since $\Gamma_1$ and $\Gamma_2$ are independent, we have 
\begin{equation}
p(x, \gamma_1, \gamma_2) = p(\gamma_1, \gamma_2 \:|\: x)p(x) = p(x) \prod_{t=1}^{2} p(\gamma_t\:|\:x) 
\end{equation}
where the conditional distribution $p(\gamma_t\:|\:x)$ can be derived by enumerating all $R_t\in\mathfrak{R}(N_t)$ with respect to $N_t$. The marginal distribution $p(\gamma_1, \gamma_2)$ can be obtained as 
\begin{equation} 
p(\gamma_1, \gamma_2) = \sum_{x\in\mathcal{X}} p(x, \gamma_1, \gamma_2) ~.
\end{equation} 
Then, the joint entropy $H(\Gamma_1, \Gamma_2, X)$ is given by
\begin{equation}
H(\Gamma_1, \Gamma_2, X) = 
-\sum_{x\in\mathcal{X}} \sum_{\gamma_1\in\mathcal{G}_1} \sum_{\gamma_2\in\mathcal{G}_2}  p(x, \gamma_1, \gamma_2) \log p(x, \gamma_1, \gamma_2)  
\label{eq:joint1}   
\end{equation}  
and the joint entropy $H(\Gamma_1, \Gamma_2)$ is given by
\begin{equation}
H(\Gamma_1, \Gamma_2) = 
- \sum_{\gamma_1\in\mathcal{G}_1} \sum_{\gamma_2\in\mathcal{G}_2}  p(\gamma_1, \gamma_2) \log p(\gamma_1, \gamma_2) ~.
\label{eq:joint2}  
\end{equation}    

Fig. \ref{fig:entropy:sR} plots $I(X; \Gamma_1, \Gamma_2)$ with respect to the parameter $c_{\rm R}$ for $s=4, 7, 14$. With $c_{\rm N}=15$, we randomly select ten pairs of moduli $N_1$ and $N_2$ from the $\nu(2^{15})=3512$ available primes. For each pair of $N_1$ and $N_2$, and for each particular value of $c_{\rm R}$ and $s$, we calculate $I(X; \Gamma_1, \Gamma_2)$ and present in Fig. \ref{fig:entropy:sR} the mean and standard deviation of $I(X; \Gamma_1, \Gamma_2)$. It can be seen that, as $c_{\rm R}$ increases, $I(X; \Gamma_1, \Gamma_2)$ monotonically decreases and approaches zero. The larger the value of $s$, the smaller is $I(X; \Gamma_1, \Gamma_2)$ when it converges.
 
Fig. \ref{fig:entropy:N} shows the effect of modulus on $I(X; \Gamma_1, \Gamma_2)$. Given $s=7$ and $c_{\rm R}=15$, the plot displays $I(X; \Gamma_1, \Gamma_2)$ with respect to the moduli $N_1$ and $N_2$ that are chosen to span the range $(0, 2^{15})$. In general, as $N_1$ or $N_2$ increases, $I(X; \Gamma_1, \Gamma_2)$ decreases and approaches zero. When both $N_1$ and $N_2$ are larger than $139$, corresponding to 99 percent of prime numbers in the range $(0, 2^{15})$, $I(X; \Gamma_1, \Gamma_2)$ quickly drops to $0.066$ bits. This indicates that information leakage in Montgomery domains can also be made negligibly small by choosing sufficiently large moduli values.

\begin{figure*}[t!]
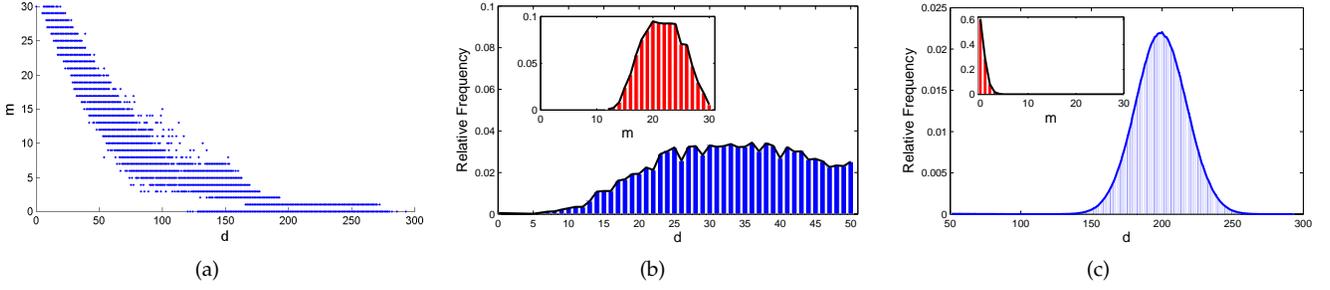

  \centering  
  \subfigure[]{
    \includegraphics[width=.31\textwidth]{figure7a.eps} 
    \label{fig:anony:c}
  }  
  \subfigure[]{ 
    \includegraphics[width=.31\textwidth]{figure7b.eps} 
    \label{fig:anony:a}
  }
  \subfigure[]{ 
   \includegraphics[width=.31\textwidth]{figure7c.eps} 
   \label{fig:anony:b}
  }
\caption{Effect of distance obfuscation with $L=30$ for the UBIRIS dataset. (a) Mapping from Hamming distance $d$ to obfuscated distance measure~$m$. (b) Histogram of $d$ and that of $m$ for neighbouring pairs with $d\leq 50$. (c) Histogram of $d$ and that of $m$ for non-neighbouring pairs with $d > 50$.} 
\label{fig:anony}
\end{figure*}   

\subsection{Privacy Gain in Index Generation} 
\label{sec:analysis:gain}

To measure the effect of applying privacy-preserving mechanisms in index generation, we introduce the notion of \emph{privacy gain}, defined in our context as the increase in the uncertainty of inferring a biometric identity $\mathbf{p}$ from the search indexes. 
Let $\hat{\mathbf{p}}$ be a random variable whose possible values are estimates of $\mathbf{p}$, which we recall is in the form of a binary string of length $D$ bits. In an information-theoretic approach, the privacy gain can be measured by the entropy of $\hat{\mathbf{p}}$ conditional on the knowledge of the search indexes. Accordingly,
the privacy gain is \emph{null} if one directly uses the segments $\{ \mathbf{p}^{(1)}, \mathbf{p}^{(2)}, \dots, \mathbf{p}^{(L)}\}$ for indexing $\mathbf{p}$ as the adversary can recover $\mathbf{p}=\mathbf{p}^{(1)}||\mathbf{p}^{(2)}||\dots||\mathbf{p}^{(L)}$ unambiguously. On the other hand, the privacy gain is \emph{maximum}, i.e., $D$ bits, if the search indexes are generated in an uninformative way.

Let us consider what if we use exactly the original form of the obfuscated segments $\{ \mathcal{V}_{\mathbf{p}^{(1)}}, \mathcal{V}_{\mathbf{p}^{(2)}}, \dots, \mathcal{V}_{\mathbf{p}^{(L)}} \}$ for hash-based indexing. It is clear in this case that, for any $i$, an estimate of $\mathbf{p}^{(i)}$ has only two outcomes in the set $\mathcal{V}_{\mathbf{p}^{(i)}}$, i.e., $\hat{\mathbf{p}}^{(i)}=\tilde{\mathbf{p}}^{(i)}_j$ for $j=0,1$ with equal probability. Thus, the conditional entropy $H(\hat{\mathbf{p}}^{(i)}\mid\mathcal{V}_{\mathbf{p}^{(i)}})$ is exactly one bit for all~$i$, and hence the privacy gain in this case is $L$ bits.

We recall that the baseline ``LSH + partial distance'' approach \cite{Weng16tkde} generates the search indexes by performing random sampling of bit positions. As a result, the more the distinct positions that are sampled, the less is the uncertainty left in inferring the biometric identity. The privacy gain of the LSH-based approach is thus equal to the number of unknown bits. For comparison, we randomly sample an input string with $s=8$ bits at a time for $L=50$ times with replacement. We perform simulations over 100,000 input strings each of length 400 bits. The input bits are independently generated and each bit is 0 or 1 with equal probability. The results show that the privacy gain in such a context is 147 bits on average.

Now, consider our proposed index generation scheme using the Montgomery form of each substring $\tilde{\mathbf{p}}^{(i)}_j \in \mathcal{V}_{\mathbf{p}^{(i)}}$. In this context, if every $\tilde{\mathbf{p}}^{(i)}_j$ has two independently generated signatures in Montgomery domains, we have a total of four randomized signatures for indexing $\mathbf{p}^{(i)}$, denoted by $\Gamma_1^{(i)}, \Gamma_2^{(i)}, \Gamma_3^{(i)}, \Gamma_4^{(i)}$. In this case, let us again treat the natural number representation of $\hat{\mathbf{p}}^{(i)}$ as a discrete random variable $X$ with probability mass function $p(x) = 2^{-s}$ for all $x$ in the range $\mathcal{X}=[0, 2^{s})$. Then, in a similar manner to the way we calculate $H(X|\Gamma_1, \Gamma_2)$ in Section \ref{sec:analysis:entropy} through \eqref{eq:equivocation}-\eqref{eq:joint2}, and setting $s=8$, $c_{\rm R}=15$, $c_{\rm N}=15$, we obtain the mean value of the conditional entropy $H( \hat{\mathbf{p}}^{(i)}\mid \Gamma_1^{(i)}, \Gamma_2^{(i)}, \Gamma_3^{(i)}, \Gamma_4^{(i)})$ as 7.82 bits and therefore the privacy gain is 391 bits for $L=50$, which is rather close to $D=400$ bits in this case.

\subsection{Privacy-Utility Trade-off in Similarity Search}
\label{sec:analysis:tradeoff}

As discussed in Section \ref{sec:search}, the distance obfuscation mechanism in the MIMP scheme enables inference-based similarity search to make judicious test decisions based on the obfuscated distance measure without the need of directly evaluating the Hamming distance. For any particular pair of $\mathbf{p}$ and $\mathbf{q}$, given the number of substrings $L$ and the value of $m$ that is obtained from (\ref{eq:m}), the Hamming distance $d(\mathbf{p}, \mathbf{q})$ is concealed in an interval given by (\ref{eq:dm}) that depends on the value of $m$. This is analogous to the concept of data anonymization \cite{Bakken04}.

It is important to note that the parameter $L$ plays a critical role in the trade-off between privacy and utility (i.e., search accuracy) in this context. We observe that:
\begin{itemize}
\item 
In the extreme case where $L=D$, each substring has exactly one bit. Since $d(\mathbf{p}^{(i)}, \mathbf{q}^{(i)})$ in this case is either 0 or 1, by Proposition \ref{thm:collision}, we must have $C(\mathcal{V}_{\mathbf{p}^{(i)}}, \mathcal{V}_{\mathbf{q}^{(i)}}) = 1$ for all $i$ and hence $m=D$ regardless of the value of $d(\mathbf{p}, \mathbf{q})$. This case yields the maximum privacy achievable as all strings $\mathbf{p}\in\Omega$ have the same $m$ value and thus cannot be ranked. However, in this extreme case, the inference-based approach does not work as similarity search is reduced to a wild guess, representing a worst-case scenario for the search accuracy.

\item 
In the other extreme case where $L=1$, by Proposition \ref{thm:collision}, we must have $m=1$ for $d(\mathbf{p}, \mathbf{q})\leq 1$, $m\leq 1$ for $d(\mathbf{p}, \mathbf{q})=2$ and $m=0$ for $d(\mathbf{p}, \mathbf{q})>2$. That is, for any string $\mathbf{p}\in\Omega$, the $m$ value is either 0 or 1. We can determine for all strings with $m=1$ that their Hamming distance to the query is not greater than two. For those with $m=0$, they cannot be further ranked. Accordingly, if the Hamming distance of the true match is greater than two, which is most likely true in a biometric database, the inference-based search in this extreme case is also no better than a wild guess.
\end{itemize}
Intuitively, when $L$ is neither too large nor too small, the values of the obfuscated distance measure $m$ are more likely to span a wider range. As we have observed in Section \ref{sec:search:rank} for the UBIRIS dataset, this makes it more feasible to order the retrieved candidates in multiple ranks and hence increases the search accuracy of the inference-based approach.

As a result of Proposition \ref{thm:collision}, for a pair of $\mathbf{p}$ and $\mathbf{q}$ with a particular value of $m$, the actual Hamming distance $d(\mathbf{p}, \mathbf{q})$ may take any integer value from a certain range. On the other hand, if $1<L<D$, it is clear that different distributions of the mismatching bits between $\mathbf{p}$ and $\mathbf{q}$ can lead to different values of $m$ even if the Hamming distance $d(\mathbf{p}, \mathbf{q})$ retains the same. Such a one-to-many relationship between the Hamming distance and the obfuscated distance measure ensures in all cases more or less privacy protection for the inference-based approach.

Fig. \ref{fig:anony} illustrates the effect of distance obfuscation with $L=30$ for the UBIRIS dataset. As discussed in Section \ref{sec:search:mimp}, we consider a total of 3,840,000 pairwise comparisons. Among them, given the Hamming radius $r=50$, we have 6,026 neighbouring pairs and 3,833,974 non-neighbouring pairs. In Fig. \ref{fig:anony:c}, we plot the mapping from the Hamming distance $d$ to the obfuscated distance measure $m$ for all the 3,840,000 pairwise comparisons, which clearly exhibits a one-to-many relationship between $d$ and $m$. In Fig. \ref{fig:anony:a}, we present the histogram of $d$ and that of $m$ for all neighbouring pairs (i.e., those with $d\leq 50$). Likewise, the histograms for all non-neighbouring pairs (i.e., those with $d>50$) are presented in Fig. \ref{fig:anony:b}. In both cases, we observe that the distribution of $d$ is very different from that of $m$.

\section{Performance Evaluation}
\label{sec:perform}

In this section, we evaluate the performance of the proposed MIMP scheme in randomized Montgomery domains for inference-based similarity search. For ease of description, we shall hence call it as MIMP-RM for short. As discussed in Section \ref{sec:search:rank}, a simple implementation of $k$-NN search using MIMP-RM is to choose the top $k$ candidates with the largest values of the obfuscated distance measure $m$. We demonstrate that the accuracy of candidate retrieval using this simple and privacy-preserving approach is close to that of conventional similarity search based on explicit distance values, but the associated cost is significantly reduced compared to cryptographic methods. We implement the test algorithms in Matlab and run the experiments on a 3.4 GHz Intel\textsuperscript{\textregistered} machine. For all experiments, the parameters used in MIMP-RM to generate the search indexes are $T=2$, $c_{\rm N}=15$, and $c_{\rm R}= 15$.

\subsection{Impact of Mismatching Bit Distribution}
\label{sec:perform:chara}

Here, we investigate how the mismatching bit distribution can affect the retrieval performance of MIMP-RM. Specifically, we randomly generate a query set containing 200 binary strings of length $D=2800$ bits. For each query~$\mathbf{q}$, we randomly generate a database of 200 records in the following way. First, only ten of them are $r$-neighbours of the query $\mathbf{q}$ where $r=350$. Second, we consider three different distributions of the mismatching bits. That is, the positions of the mismatching bits are randomly chosen from the range $[1, D/i]$ for $i=1, 2, 4$. In this way, a smaller $i$ corresponds to more evenly distributed mismatching bits over the binary string. Given that we know the ground truth set $\mathcal{B}(\mathbf{q}; r)$ in this experiment, we perform MIMP-RM with $L=200$ on the simulated dataset and evaluate the quality of the top $k$ retrieved candidates.

Fig. \ref{Fig:errbit} plots the \emph{precision-recall} (PR) curves under the three different mismatching bit distributions. The PR curves are derived by varying $k$ from 1 to 30. The results demonstrate that both the precision and the recall can be significantly improved if the mismatching bits are more evenly distributed. 
Thus, in cases where the mismatching bits are less evenly distributed, to improve the retrieval performance based on matching piecewise binary codes, it is helpful to apply a random synchronized permutation to binary feature vectors before performing the search. 
The results also show that the precision is very high when the recall is low, corresponding to the cases where $k$ is small and indicating that the retrieved candidates are mostly relevant. This further demonstrates the effectiveness of inference-based $r$-neighbour detection where the retrieved candidates with large $m$ values have a high probability that they are $r$-neighbours of the query as discussed in Section \ref{sec:search:mimp}.

\begin{figure}[t!]  
\centering
\includegraphics[width=.8\linewidth]{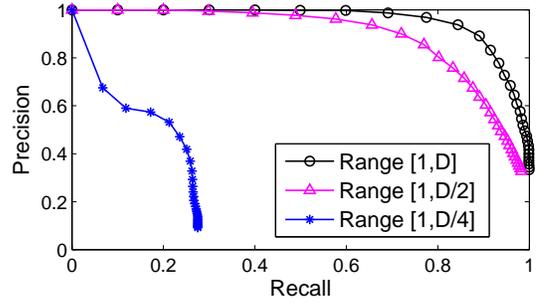}
\caption{Retrieval performance of MIMP-RM on the simulated dataset with positions of mismatching bits randomly chosen from a range.}
\label{Fig:errbit} 
\end{figure}

\subsection{Impact of the Parameter $L$}
\label{sec:perform:param}

In Section \ref{sec:analysis}, we argued that the number of substrings~$L$ is an important parameter of MIMP-RM. In particular, we showed in Section \ref{sec:analysis:entropy} that, to reduce the information leakage in randomized Montgomery domains, it is desirable to choose a large value of $s$, corresponding to a small value of $L$. In Section \ref{sec:analysis:tradeoff}, we established that choosing an appropriate value for $L$ is also subject to a fundamental trade-off between privacy and search accuracy. Here, we provide more extensive results and further demonstrate the impact of $L$ on the search accuracy of MIMP-RM. 

The experiments are conducted on public biometric datasets. In addition to the UBIRIS dataset, we also consider two benchmark face datasets. In particular, the LFW dataset \cite{LFW16} includes 13,233 face images crawled from the Internet for 5,749 subjects, and the FERET dataset \cite{FERET00} contains 2,400 face images of 200 subjects each with 12 images taken under semi-controlled environment. 
\begin{figure*} [t!]
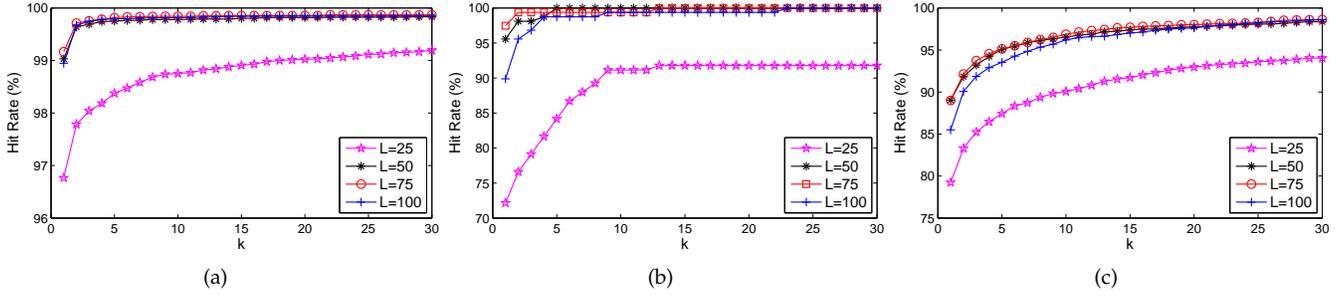

\centering
\subfigure[]
{\includegraphics[width=.32\linewidth]{figure9a.eps}}
\subfigure[]
{\includegraphics[width=.32\linewidth]{figure9b.eps}} 
\subfigure[]
{\includegraphics[width=.32\linewidth]{figure9c.eps}}
\caption{Impact of the parameter $L$ on the search accuracy of MIMP-RM with respect to the top $k$ retrieved candidates. (a) The UBIRIS dataset. (b) The LFW dataset. (c) The FERET dataset.}
\label{Fig:tablesize}
\end{figure*}

Nowadays, face recognition based on features learned from deep convolution networks, known as deep features, can achieve an accuracy over 99 percent on the once hardest LFW dataset \cite{Wang16face}. For the LFW dataset, we leverage the state-of-the-art deep features learned from a CNN framework \cite{Wu15DeepFace}. The deep network is trained on the CASIA-WebFace dataset \cite{Dong14deep}. Here, we use LSH embedding \cite{Andoni08} to convert the deep features into a binary string representation of 400 bits per LFW face image. To test our method with different data characteristics, we employ supervised PCA features on the FERET dataset. Specifically, we use six face samples per identity for training and six face samples for testing. We apply linearly separable subcodes \cite{Lim13} to convert the less robust PCA features into binary string representations of 448 bits per FERET face image, followed by a random synchronized permutation.

We enrol one sample per identity in the database and use another sample of that identity to search in each run of the experiments. The search accuracy is evaluated in terms of the \emph{hit rate}, defined as the percentage of queries with true match found in the top $k$ retrieved candidates. Fig. \ref{Fig:tablesize} presents the search accuracy results obtained by varying $k$ from 1 to 30. We test four different values of $L$, i.e., 25, 50, 75 and 100, respectively.
 
For all the three datasets, we observe that the hit rate increases significantly from $L=25$ to $L=50$ and drops at $L=100$, which is consistent with discussions in Section \ref{sec:analysis:tradeoff}. We also observe that the hit rate seems to peak at $L=75$ since there is only a marginal improvement from $L=50$ to $L=75$. To understand this latter effect, we perform an empirical study on the test samples of each database, and it reveals that over 99 percent of true match pairs have a Hamming distance smaller than $r^*=150$. Recall from Proposition \ref{thm:sic} that, in this case, with $L>r^*/2=75$, all such true match pairs are guaranteed to have a collision count and thus can be detected. Accordingly, in the context of MIMP-RM, choosing $L=r^*/2$ may serve as a rule of thumb for maximizing the search accuracy, and one may use a smaller $L$ value to strike a balance between search accuracy and privacy while reducing the information leakage of hash-based indexing in randomized Montgomery domains. 
 
\begin{figure*}[t]
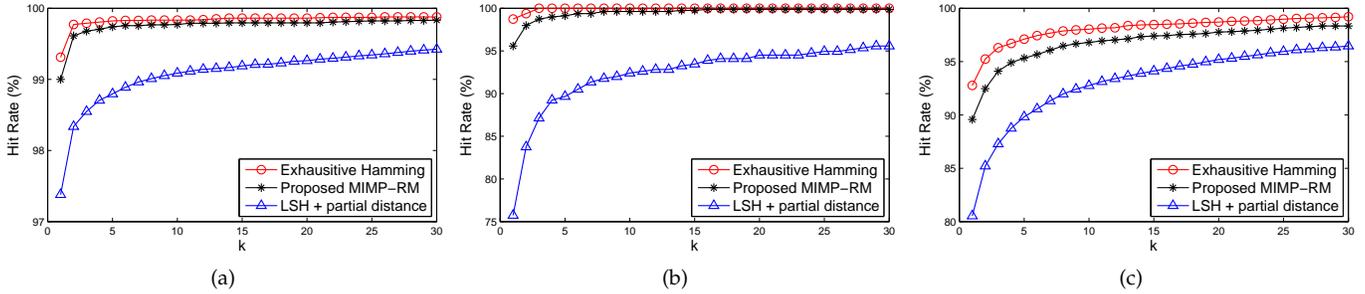

\centering
\subfigure[] 
{\includegraphics[width=.32\textwidth]{figure10a.eps} 
\label{fig:iris}}
\subfigure[]
{\includegraphics[width=.32\textwidth]{figure10b.eps} 
\label{fig:lfw}}
\subfigure[]
{
\includegraphics[width=.32\textwidth]{figure10c.eps} 
\label{fig:feret}}
\caption{Performance comparison in terms of the search accuracy with respect to the top $k$ retrieved candidates. (a) The UBIRIS dataset. (b) The LFW dataset. (c) The FERET dataset.}
\label{fig:search}
\end{figure*}

\subsection{Search Accuracy} 
\label{sec:perform:accuracy}

We demonstrate the effectiveness of MIMP-RM in search accuracy by comparing it with two baseline approaches that can be used in the context of privacy-preserving similarity search. In addition to the ``LSH + partial distance'' approach \cite{Weng16tkde}, we also consider what we call the ``exhaustive Hamming'' approach that performs exhaustive search by pairwise comparison based on full-string representations in Hamming space. Note that ``exhaustive Hamming'' provides the highest search accuracy that is achievable with secure computation methods as the latter can retain one-to-one matching accuracy in the encrypted domain \cite{Wu14}. However, such cryptography-based methods require pairwise comparison in the encrypted domain and have high computation cost as will be shown in Section \ref{sec:perform:cost}.

For ``LSH + partial distance'', we generate the unencrypted part by random sampling of bit positions as in Section \ref{sec:analysis:gain}. In this way, we create $L$ hash tables each being indexed by sub-hash codes of $s$ bits. To increase the ambiguity of query information for the server \cite{Weng15tifs}, each sub-hash code of the query has one bit value omitted so that the server needs to fill the absent bits. The retrieved candidates from the hit buckets are then ranked in the ascending order of their partial distance computed from the plaintext bit positions, i.e., the unencrypted part of the full-string representation.

Fig. \ref{fig:search} plots the search accuracy of the comparing methods in terms of the hit rate with respect to the $k$ value. The parameter $L$ is set to 50, 50 and 56 for the  UBIRIS dataset, the LFW dataset and the FERET dataset, respectively, which results in $s=8$ in all cases. We observe that MIMP-RM performs closely to ``exhaustive Hamming'', especially in Figs. \ref{fig:iris} and \ref{fig:lfw} where the full-string representations are highly discriminative in Hamming space. In all cases, MIMP-RM outperforms ``LSH + partial distance'' significantly. We believe it is mainly because the LSH-based approach in general has a low recall especially when the number of random hash functions is relatively small. In the above setting, the average number of plaintext bit positions is 233, 233 and 261 for the three datasets, respectively, representing about 58 percent of the full-string length. In contrast, MIMP-RM utilizes the full-string binary feature but in an obfuscated way that can provide highly accurate search results while increasing the uncertainty for adversarial learning in randomized Montgomery domains.

\begin{table*}[t] 
\centering
\caption{Comparison of computation cost for encoding piecewise binary codes generated from the UBIRIS dataset}
\label{table:costs} 
\renewcommand{\arraystretch}{1.2}
\begin{tabular}[c]{|l|c|c|c|c|}
\hline
 & Ciphertext size (MB) & Zipped file size (MB) & Encoding time (sec) & Scanning time (sec) \\
\hline
Paillier homomorphic encryption &  24  &  3.8   & 168.98  & 301.03  \\
\hline
Cryptographic hash function (SHA-256)  &  3  &  0.437   & 81.72  & 140.26   \\ 
\hline
Nested Montgomery form   &  0.155  &  0.029  & 1.01  & 0.0045 \\
\hline 
\end{tabular}
\end{table*}

\subsection{Computation Cost in Montgomery Domains}
\label{sec:perform:cost} 
  
Index generation and search in our proposed MIMP-RM approach mainly involve computing nested Montgomery forms. As discussed in Section \ref{sec:index:mont}, the Montgomery form of a natural number can be evaluated efficiently via (\ref{eq:MR_p}). To demonstrate the efficiency of computation in Montgomery domains, we compare it with two popular cryptographic methods, namely Paillier homomorphic encryption and the cryptographic hash function of SHA-256. We implement the former with the homomorpheR library and the latter with the sodium library on an R platform.

Table \ref{table:costs} provides the computation cost of each method conducted in R/3.2.2. The results are obtained for encoding piecewise binary codes that are generated from the UBIRIS dataset. The resulting cipher codes are further compressed using ZIP. We evaluate the computation cost by four measures. In particular, ``Ciphertext size'' and ``Zipped file size'' refer to the file size before and after compression, respectively. ``Encoding time'' refers to the CPU time required for encoding all records in the database. ``Scanning time'' refers to the CPU time required for a linear scan of the database for pairwise matching.

It can be seen in Table \ref{table:costs} that randomized signatures generated in Montgomery domains have much smaller code size and higher processing speed. Using the nested Montgomery forms, encoding is over 150 times faster and scanning is up to five orders of magnitude faster than that using Paillier homomorphic encryption. The results confirm that the efficiency of computation is significantly improved when done in Montgomery domains.

\section{Conclusion}
\label{sec:smry}

We have studied the problem of privacy-preserving similarity search in a biometric database to withstand adversarial machine learning based on the critical biometric similarity information. Unlike existing privacy protection methods for biometric identification that are in general required to perform cumbersome distance comparisons in the encrypted domain, the new approach proposed in this paper is based on statistical inference and carefully designed data obfuscation mechanisms that obviate the need for comparing exact distance values. We have also proposed to protect the data structures by performing hash-based indexing in randomized Montgomery domains with virtually negligible information leakage. Experiments on public biometric datasets have confirmed that candidate retrieval using our simple and privacy-preserving approach is both statistically reliable and computationally efficient. In future work, we are interested in extending the inference-based approach to other biometric modalities such as fingerprints that do not have a fixed-length binary string representation. We are also interested in extending the inference-based approach to search problems with distance measures other than in Hamming space.

% use section* for acknowledgment
\ifCLASSOPTIONcompsoc
  % The Computer Society usually uses the plural form
  \section*{Acknowledgments}
\else
  % regular IEEE prefers the singular form
  \section*{Acknowledgment}
\fi

The authors wish to thank the anonymous reviewers for their valuable comments that contributed to the improved quality of this paper. This work is supported by the Research Grants Council of Hong Kong (HKBU12202214) and the National Natural Science Foundation of China (61403324 and 61502058). J. Guo is the corresponding author. Most of the work was done while Y. Wang was with Hong Kong Baptist University.

% Generated by IEEEtran.bst, version: 1.14 (2015/08/26)
\bibliographystyle{IEEEtran}

\end{document}